\theoremstyle{plain}
\newtheorem{theorem}{Theorem}[section]
\newtheorem{lemma}[theorem]{Lemma}
\theoremstyle{definition}
\newtheorem{defi}[theorem]{Definition}
\theoremstyle{remark}
\newtheorem{rem}[theorem]{Remark}
\numberwithin{equation}{section}
\newcommand{\smth}[1]{\ensuremath{C^{\infty}(#1)}}
\newcommand{\singfun}{\ensuremath{C^{\infty}_{\mathrm{Sing}}(0,1)}}
\newcommand{\nonsingfun}{\ensuremath{C^{\infty}_{\mathrm{0}}(0,1)}}
\newcommand{\powser}[1]{\ensuremath{\mathscr{O}(#1)}}
\newcommand{\psloc}[1]{\ensuremath{\mathscr{O}_{\mathrm{loc}}(#1)}}
\newcommand{\asy}[1]{\ensuremath{\mathscr{A}(#1)}}
\newcommand{\cpairs}{\ensuremath{\mathscr{GP}}}
\newcommand{\ctrips}{\ensuremath{\mathscr{GT}}}
\newcommand{\gf}{\ensuremath{\mathbb{R}}}
\newcommand{\ltens}[2]{\ensuremath{#1(\!(#2)\!)}}
\newcommand{\cotimes}{\ensuremath{\hat{\otimes}}}
\newcommand{\noproof}{\begin{flushright} \ensuremath{\square} \end{flushright}}
\newcommand{\fields}{\ensuremath{C^{\infty}(M)}}
\DeclareMathOperator{\Hom}{Hom}
\DeclareMathOperator{\Aut}{Aut}
\DeclareMathOperator{\Bij}{Bij}
\DeclareMathOperator{\Sing}{Sing}
\begin{document}
\title{On two constructions of an effective field theory}
\author{Alastair Hamilton}
\address{Texas Tech University, Department of Mathematics and Statistics, Lubbock TX 79407-1042. USA.}
\email{alastair.hamilton@ttu.edu}
\begin{abstract}
In this paper we discuss two constructions of an effective field theory starting from a local interaction functional. One relies on the well-established graphical combinatorics of the BPHZ algorithm to renormalize divergent Feynman amplitudes. The other, more recent and due to Costello, relies on an inductive construction of local counterterms that uses no graphical combinatorics whatsoever. We show that these two constructions produce the same effective field theory.
\end{abstract}
\keywords{Effective field theory, renormalization, BPHZ algorithm, counterterms.}
\subjclass[2010]{46E10, 46M05, 51P05, 58J35, 58Z05, 81T15, 81T17, 81T18.}
\thanks{The author gratefully acknowledges the support of the Simons Foundation, grant 279462.}
\maketitle

\section{Introduction}

This article discusses two approaches to constructing effective field theories through the renormalization of a quantum field theory. The first approach, due to Costello \cite{coseffthy}, constructs an effective field theory starting from any choice of a local action functional. It relies on a particular construction of local counterterms which are used to render certain functional integrals defining the effective field theory finite. The algorithm that Costello uses to construct these counterterms is simple and uses no graphical combinatorics. In fact, this algorithm may be formulated without the need to even mention the notion of a Feynman amplitude. One proceeds carefully through the terms in the perturbation series in a specific order, renormalizing each term along the way. An interesting development arising from this perspective on effective field theory lies in the quantization of gauge theories in the Batalin-Vilkovisky formalism, where Costello arrives at the definition of a new and interesting smooth invariant of certain low-dimensional manifolds \cite{costbv}, \cite{coseffthy}.

The second approach we discuss relies on the graphical combinatorics of the well-known algorithm due to Bogoliubov-Parasiuk \cite{bogpar}, Hepp \cite{hepp} and Zimmermann \cite{zimmer} (BPHZ) to replace the divergent Feynman amplitudes in the functional integrals defining the effective field theory with finite values. This algorithm has been a central part of quantum field theory for many decades \cite{collins}. More recently, Connes and Kreimer \cite{ckhopfI}, \cite{ckhopfII} have encoded this algorithm through a Hopf algebra in which the BPHZ algorithm is described in terms of the Birkhoff factorization of characters of this Hopf algebra. Lately, questions have been raised in the literature \cite{costbv}, \cite[\S 1.11]{coseffthy} and elsewhere \cite{review} concerning what, if any, connection may exist between the approach of Costello \cite{coseffthy} and the approach of Connes-Kreimer \cite{ckhopfI} which relies on the BPHZ construction of counterterms for a renormalizable quantum field theory.

In this paper we provide a positive answer to this question. Namely, we show that these two constructions of an effective field theory that come from the work \cite{coseffthy} of Costello and from the BPHZ algorithm coincide. One conclusion of this is that in our situation the BPHZ construction of counterterms has a particularly simple nondiagrammatic formulation, although we should mention that in this case the information regarding the divergence of each graph is lost, as the singular components of a number of graphs are combined into a single counterterm.

Throughout the paper we choose our space of fields $\mathcal{E}$ to be the space of smooth functions on a Riemannian manifold, that is to say we only consider scalar field theories, although it will quickly be evident from the exposition that our results hold for a much broader class of choices for $\mathcal{E}$, such as the space of global sections of a vector bundle of the form described in Section 2.13 of \cite{coseffthy}. However, our choice of $\mathcal{E}$ has the advantage that it's definition is simple, transparent and nontechnical, but yet it retains the appropriate amount of structure required to exhibit the coincidence of the two constructions described above. We mention, of course, that one obvious problem we face in choosing the appropriate level of generality in which to frame our results is that no mathematical consensus currently exists as to what a quantum field theory actually is.

The layout of the paper is as follows. Section \ref{sec_effthy} recalls the basic picture of effective field theory that is outlined by Costello in \cite{coseffthy}. Section \ref{sec_calgo} describes Costello's construction of the local counterterms he uses for renormalizing a quantum field theory and recalls how this gives rise to an effective field theory. Section \ref{sec_BPHZ} gives a careful formulation of the BPHZ algorithm as it applies to our particular situation and proves some basic properties of the counterterms produced. In Section \ref{sec_main} we prove the main theorem of the paper which states that the effective field theory described by Costello and the effective field theory produced by the BPHZ algorithm are the same.

\subsection*{Notation and conventions}

In this paper we will consider topological vector spaces over the real numbers. We will denote the completed projective tensor product of two locally convex topological vector spaces $\mathcal{V}$ and $\mathcal{U}$ by $\mathcal{V}\cotimes \mathcal{U}$; the symbol $\otimes$ will be reserved for the standard tensor product. Since all of the topological vector spaces that we will apply $\cotimes$ to will be nuclear spaces, the projective and injective tensor products coincide. This of course means that we work in a nice symmetric monoidal category.

Given a real topological vector space $\mathcal{V}$, we will denote its \emph{continuous} linear dual by $\mathcal{V}^{\dag}$. Throughout the paper, we will always equip the dual space with the strong topology. We will denote the space of \emph{continuous} linear maps between two topological vector spaces $\mathcal{V}$ and $\mathcal{U}$ by $\Hom(\mathcal{V},\mathcal{U})$. Likewise, this space always carries the strong topology. We will make frequent use of the fact (cf. Proposition 50.7 of \cite{treves}) that if $\mathcal{V}$ and $\mathcal{U}$ are nuclear Fr\'echet spaces then,
\begin{equation} \label{eqn_dualtensor}
\mathcal{V}^{\dag}\cotimes \mathcal{U}^{\dag}=(\mathcal{V}\cotimes \mathcal{U})^{\dag}.
\end{equation}
If $\mathcal{V}$ is a nuclear Fr\'echet space then we denote the algebra of formal power series on $\mathcal{V}$ by
\[ \powser{\mathcal{V}}:=\prod_{n=0}^{\infty} \left[(\mathcal{V}^{\dag})^{\cotimes n}/S_n\right] = \prod_{n=0}^{\infty} \left[\Hom(\mathcal{V}^{\cotimes n},\gf)/S_n\right], \]
where the symmetric group $S_n$ acts in the obvious way by permuting factors.

Given a compact manifold $M$, we denote the space of smooth functions on $M$ by $\smth{M}$. This carries the topology of uniform convergence on compact subsets of the functions and their derivatives. With this topology, $\smth{M}$ is a nuclear Fr\'echet space. We will make frequent use of the result that for two compact manifolds $M$ and $N$,
\[ \smth{M}\cotimes\smth{N} = \smth{M\times N}. \]

If $R$ is a ring then the ring of formal power series in $\hbar$ over $R$ will be denoted by $R[[\hbar]]$. If $A$ and $B$ are finite sets then $|A|$ will refer to the cardinality of $A$ and we will denote the set of bijections between $A$ and $B$ by $\Bij(A,B)$.

\section{Effective field theory} \label{sec_effthy}

In this section we recall some basic background material on effective field theory from \cite{coseffthy}, which follows the work of Wilson \cite{wilson}. We start by fixing some of the geometric and analytic framework that we will work in.

\subsection{Geometric background}

Let $M$ be a compact Riemannian manifold.  We fix our space of fields to be $\mathcal{E}:=\fields$; that is to say that we shall only consider (massive) scalar field theories. We denote the Laplacian on $\fields$ by $\Delta$, whose definition is fixed by the convention that its eigenvalues are nonnegative. The heat kernel
\[ k\in C^{\infty}(M\times M\times (0,\infty)) \]
is defined uniquely by the equation
\[ \int_{y\in M} k (x, y, t)\phi (y) dy = e^{-t(m^2+\Delta)}\phi (x); \qquad  \phi\in\fields,  x\in M, t>0; \]
where $m>0$ is the mass.

From the heat kernel we define a propagator
\[ P(\varepsilon,L)\in\mathcal{E}\cotimes\mathcal{E}=C^{\infty}(M\times M); \qquad \varepsilon>0, 0<L\leq\infty;  \]
by the equation\footnote{If $\varepsilon>L$ we define the propagator by $P(\varepsilon,L):= -P(L,\varepsilon)$.}
\[ P(\varepsilon,L):(x, y)\mapsto \int_{t=\varepsilon}^{t=L} k (x, y, t) dt. \]
Note that since the heat kernel is symmetric in the variables $x$ and $y$, $P(\varepsilon,L)$ is a $\mathbb{Z}_2$-invariant tensor.

Following \cite{coseffthy}, we adopt the following definition of locality.

\begin{defi}
We say a functional $I:\mathcal{E}^{\cotimes k}\to\gf$ is a \emph{local functional} if it is the integral of a product of differential operators; more specifically, we require that $I$ be a sum of operators of the form,
\[ f_1\otimes\cdots\otimes f_k \mapsto \int_M \left[D_1(f_1)\cdots D_k(f_k)\right], \]
where the $D_i:\fields\to\fields$ are differential operators.

We denote the subspace of local functionals by
\[ \Hom_{\mathrm{loc}}(\mathcal{E}^{\cotimes k},\gf). \]
Likewise, we define $\psloc{\mathcal{E}}\subset\powser{\mathcal{E}}$ by
\[ \psloc{\mathcal{E}}:=\prod_{n=0}^\infty\left[\Hom_{\mathrm{loc}}(\mathcal{E}^{\cotimes n},\gf)/S_n\right]. \]
\end{defi}

The kinetic part of our theory will be fixed as
\begin{equation} \label{eqn_kinetic}
K(\phi,\phi):=-\frac{1}{2}\int_{y\in M} \phi(y)\cdot(\Delta+m^2)\phi(y) dy
\end{equation}
and an interaction will be defined as follows.

\begin{defi}
An \emph{interaction} is an element $I\in\powser{\mathcal{E}}[[\hbar]]$. We may write such an interaction as
\[ I=\sum_{i,j=0}^{\infty} \hbar^i I_{ij}, \qquad I_{ij}\in\Hom(\mathcal{E}^{\cotimes j},\gf)/S_j. \]
We require that $I_{00}=I_{01}=I_{02}=0$. If in addition $I\in\psloc{\mathcal{E}}[[\hbar]]$ we say $I$ is a \emph{local interaction}.
\end{defi}

\subsection{Feynman diagram expansion}

The renormalization group flow will be defined, following \cite{coseffthy}, through a Feynman diagram expansion, which in this section we describe rather formally. The reader may wonder why we give such a formal definition of what is after all a very standard part of quantum field theory. The reason is that we wish to use this section to introduce the requisite algebraic and combinatorial background from \cite{cycop}, \cite{modop} and \cite{opalgtophys} that we make use of in the rest of the paper when we come to dealing with the BPHZ algorithm and proving the main theorem.

We begin by recalling from \cite{modop} the definition of a stable graph.

\begin{defi} \label{def_stabgraph}
A stable graph $\Gamma$ consists of a (possibly empty) set $H(\Gamma)$, called the \emph{half-edges} of $\Gamma$, together with the following extra data:
\begin{itemize}
\item
A partition $V(\Gamma)$ of $H(\Gamma)$ into nonempty subsets called the \emph{vertices} of $\Gamma$.
\item
A labeling of each vertex $v\in V(\Gamma)$ by a nonnegative integer $g_v$ called the \emph{genus}. We insist that vertices of genus zero are at least trivalent.
\item
A subset $L(\Gamma)$ of $H(\Gamma)$ called the \emph{legs} of $\Gamma$.
\item
A partition $E(\Gamma)$ of the remaining half-edges $H(\Gamma)-L(\Gamma)$ into pairs called the \emph{edges} of $\Gamma$.
\end{itemize}
The \emph{genus} $g(\Gamma)$ of a stable graph $\Gamma$ is defined by
\[ g(\Gamma):= \dim(H_1(\Gamma,\mathbb{Q})) + \sum_{v\in V(\Gamma)} g_v. \]
\end{defi}

We now recall how to label the factors in a tensor product by a set.

\begin{defi}
Let $\mathcal{V}$ be a nuclear space and $A$ be a finite set of cardinality $n:=|A|$. Define
\[ \ltens{\mathcal{V}}{A}:= \left[\bigoplus_{f\in\Bij(\{1,\ldots, n\},A)} \mathcal{V}^{\cotimes n}\right]^{S_n}; \]
where we consider the $S_n$-invariants of the action that simultaneously permutes the summands and the tensor factors.
\end{defi}

There is an obvious family of isomorphisms of topological vector spaces,
\begin{equation} \label{eqn_tensorin}
i_f: \mathcal{V}^{\cotimes n}\to \ltens{\mathcal{V}}{A}, \quad f\in\Bij(\{1,\ldots, n\},A);
\end{equation}
satisfying $i_{f\sigma}=i_f\circ\sigma$ for all $\sigma\in S_n$. Consequently, there is a well-defined canonical map,
\begin{equation} \label{eqn_symquot}
\ltens{\mathcal{V}}{A} \to \mathcal{V}^{\cotimes n}/S_n,
\end{equation}
that does not depend on picking a bijection. As noted in \cite{cycop} and \cite[\S II.1.7]{opalgtophys}, $\ltens{\mathcal{V}}{\cdot}$ is a functor on the category of sets where morphisms are bijections. We can also make the canonical identification,
\[ \ltens{\mathcal{V}}{A\sqcup B} = \ltens{\mathcal{V}}{A}\cotimes\ltens{\mathcal{V}}{B}, \]
and provided that $\mathcal{V}$ is also a Fr\'echet space,
\[ \ltens{\mathcal{V}^{\dag}}{A} = {\ltens{\mathcal{V}}{A}}^{\dag}; \]
where here we have used \eqref{eqn_dualtensor}.

We now place the well-known definition of Feynman amplitude in this framework.

\begin{defi} \label{def_feynamp}
Let $I\in\powser{\mathcal{E}}[[\hbar]]$ be an interaction and take a propagator $P\in\mathcal{E}\cotimes\mathcal{E}$ (by a propagator, we just mean a $\mathbb{Z}_2$-invariant tensor). Define
\begin{equation} \label{eqn_itilde}
\widetilde{I}_{ij}:= \sum_{\sigma\in S_j} \sigma\cdot I_{ij} \in \big(\mathcal{E}^{\dag}\big)^{\cotimes j}.
\end{equation}
Given a stable graph $\Gamma$, we define its Feynman amplitude as follows. Decorate each vertex $v\in V(\Gamma)$ of genus $g_v$ by $\widetilde{I}_{g_v,|v|}\in\ltens{\mathcal{E}^{\dag}}{v}$ using \eqref{eqn_tensorin}. Combining all these yields a tensor,
\begin{equation} \label{eqn_feynint}
\bigotimes_{v\in V(\Gamma)} \left[ \widetilde{I}_{g_v,|v|} \right] \in\mathcal{E}^{\dag}\bigg(\!\!\bigg(\bigcup_{v\in V(\Gamma)}v\bigg)\!\!\bigg) = \ltens{\mathcal{E}^{\dag}}{H(\Gamma)}.
\end{equation}
Now, using \eqref{eqn_tensorin} again, decorate every edge $e\in E(\Gamma)$ with the propagator $P\in\ltens{\mathcal{E}}{e}$. Combining these yields a tensor,
\begin{equation} \label{eqn_feynprop}
\bigotimes_{e\in E(\Gamma)} \left[ P \right]\in \mathcal{E}\bigg(\!\!\bigg( \bigcup_{e\in E(\Gamma)} e \bigg)\!\!\bigg).
\end{equation}
Evaluating the tensor \eqref{eqn_feynint} upon the tensor \eqref{eqn_feynprop} yields a tensor
\begin{equation} \label{eqn_feynamp}
F_{\Gamma}(I,P) \in \ltens{\mathcal{E}^{\dag}}{L(\Gamma)},
\end{equation}
called the Feynman amplitude of $\Gamma$. Applying the map \eqref{eqn_symquot} to \eqref{eqn_feynamp} yields a monomial of degree $|L(\Gamma)|$ in \powser{\mathcal{E}}. We will frequently abuse notation and denote this monomial by $F_{\Gamma}(I,P)$ as well. The context will make it clear which meaning is intended, as in the following definition.
\end{defi}

\begin{defi}
Let $I\in\powser{\mathcal{E}}[[\hbar]]$ be an interaction and $P\in\mathcal{E}\cotimes\mathcal{E}$ be a propagator, as above. We define the Feynman diagram expansion by
\[ W(I,P):= \sum_{i=0}^{\infty} \hbar^iI_{i0} + \sum_{\Gamma} \frac{\hbar^{g(\Gamma)}}{|\Aut(\Gamma)|} F_{\Gamma}(I,P)\in\powser{\mathcal{E}}[[\hbar]], \]
where we sum over all (isomorphism classes of) connected\footnote{Note that the empty set is not considered to be a connected graph.} stable graphs $\Gamma$.
\end{defi}

\begin{rem}
The above expression could be defined without using Feynman diagrams as
\begin{equation} \label{eqn_feynographs}
W(I,P) = \hbar\ln\left[\exp(\hbar\partial_P)[\exp(I/\hbar)]\right],
\end{equation}
where $\partial_P$ is the second order differential operator on $\powser{\mathcal{E}}$ associated to the propagator $P$, cf. \cite[\S 2.3.4]{coseffthy}. This means that everything we define in this section could have been introduced without any need to ever mention Feynman diagrams. We point this out not because we wish to adopt this slightly unconventional point of view. Indeed, we need to work with Feynman diagrams so that we may eventually make contact with the BPHZ algorithm in sections \ref{sec_BPHZ} and \ref{sec_main}. Instead, we wish to emphasize that Costello's algorithm and construction of an effective field theory does not rely upon any graphical combinatorics and could in fact be formulated without ever using Feynman diagrams.
\end{rem}

Let us denote the homogeneous part of $W(I,P)$ of order $i$ in $\hbar$ and $j$ in $\mathcal{E}$ by
\[ W_{ij}(I,P)\in\Hom(\mathcal{E}^{\cotimes j},\mathbb{R})/S_j. \]
In this way we may write
\[ W(I,P) = \sum_{i,j=0}^{\infty} \hbar^i W_{ij}(I,P). \]
A convenient formula for $W_{ij}(I,P)$ is,
\begin{equation} \label{eqn_feycomp}
W_{ij}(I,P) = I_{ij} \ + \ \sum_{\Gamma: \ \begin{subarray}{c} g(\Gamma)=i, \ |L(\Gamma)|=j, \\ |E(\Gamma)|>0. \end{subarray}}\frac{1}{|\Aut(\Gamma)|}F_{\Gamma}(I,P); \qquad i,j\geq 0.
\end{equation}
Here we sum over all connected stable graphs with at least one edge and having both a fixed genus and a fixed number of legs.

Following \cite{coseffthy} we define a well-ordering on the indexing set as follows,
\begin{equation} \label{eqn_indexorder}
\left[(i,j)<(i',j')\right]\Leftrightarrow \left[i<i'\text{ or }(i=i'\text{ and }j<j')\right];\qquad i,j\geq 0.
\end{equation}

\begin{lemma} \label{lem_basicid}
Let $I\in\powser{\mathcal{E}}[[\hbar]]$ be an interaction and $P\in\mathcal{E}\cotimes\mathcal{E}$ be a propagator as before and suppose that
\[ J:=\sum_{(p,q)\geq (i,j)} \hbar^p J_{pq}, \qquad J_{pq}\in\Hom(\mathcal{E}^{\cotimes q},\gf)/S_q; \]
is a power series consisting of terms of order $\geq(i,j)$. Then
\[ W(I-J,P) = W(I,P) - \hbar^iJ_{ij} + \text{terms of order}>(i,j). \]
\end{lemma}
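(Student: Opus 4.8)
The plan is to work order by order using the explicit formula \eqref{eqn_feycomp} for $W_{pq}$, exploiting the fact that the Feynman amplitude $F_\Gamma(I,P)$ is multilinear in the decorations $\{I_{g_v,|v|}\}_{v\in V(\Gamma)}$ attached to the vertices of $\Gamma$. Substituting $(I-J)_{g_v,|v|}=I_{g_v,|v|}-J_{g_v,|v|}$ at each vertex and expanding by multilinearity, I would obtain
\[ F_\Gamma(I-J,P) = \sum_{S\subseteq V(\Gamma)} (-1)^{|S|} F_\Gamma^S(I,J,P), \]
where $F_\Gamma^S$ denotes the amplitude computed with the vertices in $S$ decorated by the relevant component of $J$ and the remaining vertices decorated by $I$; the term $S=\emptyset$ reproduces $F_\Gamma(I,P)$. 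Subtracting, the difference $F_\Gamma(I-J,P)-F_\Gamma(I,P)$ is a sum over nonempty $S$, each summand of which vanishes unless every vertex $v\in S$ has $J_{g_v,|v|}\neq 0$, that is $(g_v,|v|)\geq (i,j)$.

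The heart of the argument is then a combinatorial claim that I would isolate as a lemma: if $\Gamma$ is a connected stable graph with at least one edge, then $(g(\Gamma),|L(\Gamma)|)>(g_v,|v|)$ for every vertex $v\in V(\Gamma)$ in the ordering \eqref{eqn_indexorder}. To prove it I would split into cases. Since $g(\Gamma)=\dim H_1(\Gamma,\mathbb{Q})+\sum_w g_w\geq g_v$, the only case requiring work is $g(\Gamma)=g_v$, which forces $\dim H_1(\Gamma,\mathbb{Q})=0$ (so $\Gamma$ is a tree) and $g_w=0$ for all $w\neq v$. Because $\Gamma$ has an edge it has at least two vertices, and every vertex other than $v$ has genus zero and is therefore at least trivalent by the stability hypothesis. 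Counting half-edges in a tree on $|V|$ vertices gives $|L(\Gamma)|=\sum_w |w|-2(|V|-1)$, and the bound $\sum_{w\neq v}|w|\geq 3(|V|-1)$ yields $|L(\Gamma)|\geq |v|+(|V|-1)>|v|$, establishing the strict inequality.

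Granting this lemma, I would assemble the result. Applying \eqref{eqn_feycomp} to both $I-J$ and $I$ and subtracting, the explicit linear term contributes $(I-J)_{pq}-I_{pq}=-J_{pq}$, while the graph sum contributes only through the nonempty-$S$ terms above. Any such nonzero term requires a vertex with $(g_v,|v|)\geq (i,j)$ sitting on a connected graph $\Gamma$ with $|E(\Gamma)|>0$ and $(g(\Gamma),|L(\Gamma)|)=(p,q)$; by the lemma this forces $(p,q)>(g_v,|v|)\geq (i,j)$. Hence for $(p,q)\leq (i,j)$ the graph sum contributes nothing, so $W_{pq}(I-J,P)-W_{pq}(I,P)=-J_{pq}$, which is $0$ when $(p,q)<(i,j)$ (as $J$ has no such terms) and equals $-J_{ij}$ when $(p,q)=(i,j)$. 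Collecting these over all orders and recalling $W=\sum_{p,q}\hbar^p W_{pq}$ gives exactly $W(I-J,P)=W(I,P)-\hbar^i J_{ij}+(\text{terms of order}>(i,j))$.

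The step I expect to be the main obstacle is the combinatorial lemma, specifically the borderline case $g(\Gamma)=g_v$, where the strict inequality in the number of legs hinges essentially on the stability hypothesis that genus-zero vertices are at least trivalent; without it a bivalent genus-zero vertex could be inserted along an edge without raising the order, and the lemma would fail.
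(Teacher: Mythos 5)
Your proof is correct and amounts to the detailed version of what the paper dismisses as ``a very simple calculation'': expanding $F_{\Gamma}(I-J,P)$ by multilinearity in the vertex decorations and reducing everything to the inequality $(g_v,|v|)<(g(\Gamma),|L(\Gamma)|)$ for every vertex $v$ of a connected stable graph with at least one edge, which is exactly the ``elementary argument'' the paper itself invokes later in the proof of Theorem \ref{thm_main}. Your treatment of the borderline case $g(\Gamma)=g_v$ (forcing a tree with all other vertices of genus zero, hence at least trivalent, so that $|L(\Gamma)|\geq |v|+(|V|-1)>|v|$) is the right and complete way to get the strict inequality.
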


\begin{proof}
This follows from a very simple calculation.
\end{proof}

\subsection{Effective field theory}

In this section we recall from \cite{coseffthy} Costello's definition of an effective field theory. We will work in the length scale formulation rather than the more conceptual energy scale formulation, which means that our functional integrals are regularized by cutting off all the contributions that come from allowing a particle to propagate for a proper time less than some small $\varepsilon$ before interacting. The reason for this preference, as Costello carefully explains in \cite{coseffthy}, is that it is much easier to incorporate locality in this picture. We also have another motive in mind,  which is that we would like to make use of Costello's results regarding the asymptotic behavior of the length scale regularization of these functional integrals, which involves a fairly lengthy and technical analysis of the small $t$ asymptotic expansion of the heat kernel \cite[\S A1.5]{coseffthy}. We mention that Costello shows how one may pass freely between the length and energy scale formulations \cite[\S 2.12]{coseffthy}.

\begin{defi} \label{def_effthy}
An effective (scalar) field theory with kinetic term \eqref{eqn_kinetic} is a family of effective interactions,
\[ I[L]\in\powser{\mathcal{E}}[[\hbar]], \qquad 0<L\leq\infty; \]
satisfying the following two conditions:
\begin{enumerate}
\item
The renormalization group equation,
\[ I[L_2]=W\left(I[L_1],P(L_1,L_2)\right); \qquad 0<L_1\leq L_2\leq\infty. \]
\item
The asymptotic locality requirement on this family that there be a small $L$ asymptotic expansion,
\[ I_{ij}[L]\simeq\sum_{k=0}^\infty g_k(L)\Phi_k, \qquad i,j\geq 0; \]
where $g_k\in C^{\infty}(0,\infty)$ and $\Phi_k\in\Hom_{\mathrm{loc}}(\mathcal{E}^{\cotimes j},\gf)/S_j$.
\end{enumerate}
\end{defi}

\begin{rem}
For the sake of clarity, we recall from \cite{coseffthy} the precise meaning of asymptotic expansion in this instance. This means that there is a nondecreasing sequence $d_n\in\mathbb{Z}$, tending to infinity, such that for all $n\geq 0$,
\begin{equation} \label{eqn_asymptotic}
\lim_{L\to 0} L^{-d_n}\left(I_{ij}[L]-\sum_{k=0}^n g_k(L)\Phi_k\right)=0.
\end{equation}
We note here that it makes no difference in \eqref{eqn_asymptotic} whether we equip $\Hom(\mathcal{E}^{\cotimes j},\gf)$ with the weak or the strong topology. This is because $\mathcal{E}$ is a nuclear Fr\'echet space and hence a Montel space by Corollary 3 of Proposition 50.2 in \cite{treves}. For Montel spaces, convergence in the weak and strong topologies are equivalent, cf. Corollary 1 of Proposition 34.6 in \cite{treves}. In fact, although \cite{treves} only states this result for the dual of a Montel space, a careful examination of the argument provided in \cite{treves} shows that this statement remains true when we replace $\gf$ by any locally convex topological vector space.
\end{rem}

\section{Costello's algorithm} \label{sec_calgo}

In this section we describe Costello's algorithm \cite{coseffthy} for producing counterterms for a quantum field theory and the concomitant construction of an effective field theory. As we will shortly see, this inductive algorithm uses no graphical combinatorics and is in fact very simple to formulate.

\subsection{Singular component of an asymptotic expansion}

We begin by recalling from \cite{coseffthy} the method for extracting the singular part of the Feynman amplitude $F_{\Gamma}(I,P(\varepsilon,L))$, which makes use of Costello's analysis \cite[\S A1.5]{coseffthy} of their short length asymptotic behavior. We start with the usual definition (cf. \cite[\S 2.9]{coseffthy}) of a renormalization scheme.

\begin{defi}
A renormalization scheme is a choice of decomposition,
\begin{equation} \label{eqn_rscheme}
\smth{0,1}=\singfun\oplus\nonsingfun,
\end{equation}
of the space of smooth functions on the open unit interval into a direct sum of the space $\nonsingfun$ consisting of those functions $f (\varepsilon)$ admitting a limit as $\varepsilon\to 0$, and a complimentary subspace $\singfun$ of `purely singular' functions. We will denote the operator of projection onto the singular part by
\[ T:\smth{0,1}\to\singfun. \]
\end{defi}

From now and for the remainder of the paper we fix a choice \eqref{eqn_rscheme} of renormalization scheme. In particular, we use the same renormalization scheme in this section and Section \ref{sec_BPHZ}.

\begin{defi}
Let $\mathcal{W}$ be a locally convex Hausdorff topological vector space and consider the subspace $\asy{\mathcal{W}}$ of the space of functions from $(0,1)$ to $\mathcal{W}$, which consists of those functions $f$ that have a small $\varepsilon$ asymptotic expansion of the form,
\begin{equation} \label{eqn_smlepsasym}
f(\varepsilon)\simeq\sum_{i=0}^{\infty} g_i(\varepsilon)\Psi_i,
\end{equation}
where $\Psi_i\in\mathcal{W}$ and the $g_i\in\smth{0,1}$ have finite order poles at zero. More precisely, this means that there is a nondecreasing sequence $d_n\in\mathbb{Z}$, tending to infinity, such that for all $n\geq 0$,
\[ \lim_{\varepsilon\to 0}\varepsilon^{-d_n}\left(f(\varepsilon) - \sum_{i=0}^{n} g_i(\varepsilon)\Psi_i\right) = 0. \]
$\asy{\cdot}$ is a functor from the category of locally convex Hausdorff topological vector spaces to the category of vector spaces.
\end{defi}

For functions having asymptotic expansions of the form \eqref{eqn_smlepsasym}, we may define their singular part using our renormalization scheme \eqref{eqn_rscheme} as follows.

\begin{defi}
Suppose that $f\in\asy{\mathcal{W}}$ has small $\varepsilon$ asymptotic expansion \eqref{eqn_smlepsasym}. Then there exists $N\in\mathbb{N}$ such that $g_n(\varepsilon)\to 0$ as $\varepsilon\to 0$, for all $n\geq N$. Define,
\[ \Sing(f):=\sum_{i=0}^N T(g_i) \Psi_i. \]
Elementary arguments show that $\Sing(f)$ does not depend on $N$, or the form \eqref{eqn_smlepsasym} of the asymptotic expansion that is chosen for $f$.
\end{defi}

We collect some basic facts about the operator $\Sing$:
\begin{itemize}
\item
$\Sing:\asy{\mathcal{W}}\to\asy{\mathcal{W}}$ is an idempotent linear operator. It is natural in $\mathcal{W}$.
\item
Given a function $f\in\asy{\mathcal{W}}$, the limit $\lim_{\varepsilon\to 0} f(\varepsilon)$ exists if and only if $\Sing(f)=0$.
\item
Given $f\in\asy{\mathcal{W}}$, the limit $\lim_{\varepsilon\to 0}[f-\Sing(f)](\varepsilon)$ always exists.
\end{itemize}

The proofs of these facts are routine. In what follows we will make extensive use of the following difficult result of Costello, cf. Theorem 9.3.1 in \S 2.9 and Theorem 4.0.2 in \S A1.4 of \cite{coseffthy}.

\begin{theorem} \label{thm_feynasy}
Let $I\in\psloc{\mathcal{E}}[[\hbar]]$ be a local interaction and let $\Gamma$ be a connected stable graph. Consider the function,
\begin{equation} \label{eqn_epsilonmap}
\begin{array}{ccc}
(0,1) & \to & \Hom(\ltens{\mathcal{E}}{L(\Gamma)},\smth{0,\infty}), \\
\varepsilon & \mapsto & \left[\mathbf{a}\mapsto (L\mapsto F_{\Gamma}(I,P(\varepsilon,L))[\mathbf{a}]) \right];
\end{array}
\end{equation}
which we denote by $F_{\Gamma}(I,P(-,-))$. Then,
\begin{enumerate}
\item \label{item_feynasyexp}
$F_{\Gamma}(I,P(-,-))\in\asy{\Hom(\ltens{\mathcal{E}}{L(\Gamma)},\smth{0,\infty})}$, that is there is a small $\varepsilon$ asymptotic expansion,
\[ F_{\Gamma}(I,P(\varepsilon,-))\simeq\sum_{i=0}^{\infty} g_i(\varepsilon)\Psi_i, \]
as in \eqref{eqn_smlepsasym}.
\item \label{item_feynasyloc}
Moreover, each $\Psi_i\in\Hom(\ltens{\mathcal{E}}{L(\Gamma)},\smth{0,\infty})$ has a small $L$ asymptotic expansion in terms of local action functionals,
\[ \Psi_i(L)\simeq\sum_{j=0}^\infty f_{ij}(L)\psi_{ij}, \]
meaning each $\psi_{ij}\in\Hom_{\mathrm{loc}}(\ltens{\mathcal{E}}{L(\Gamma)},\gf)$ and each $f_{ij}\in\smth{0,\infty}$.
\end{enumerate}
\end{theorem}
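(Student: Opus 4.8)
The plan is to reduce the whole statement to the short-time asymptotics of the heat kernel and to a careful analysis of the resulting finite-dimensional integral. After pairing against legs $\mathbf{a}\in\ltens{\mathcal{E}}{L(\Gamma)}$, the Feynman amplitude unwinds into an integral over one position variable $x_v\in M$ per vertex $v\in V(\Gamma)$ and one proper-time variable $t_e\in[\varepsilon,L]$ per edge $e\in E(\Gamma)$,
\[
F_\Gamma\big(I,P(\varepsilon,L)\big)[\mathbf{a}]=\int_{[\varepsilon,L]^{E(\Gamma)}}\!\!\int_{M^{V(\Gamma)}}\mathcal{D}\!\left[\,\mathbf{a}\cdot\!\!\prod_{e\in E(\Gamma)}\!\!k(t_e)\,\right]\prod_{v\in V(\Gamma)}dx_v\prod_{e\in E(\Gamma)}dt_e,
\]
where $k(t_e)$ abbreviates the heat kernel $k(x_v,x_{v'},t_e)$ attached to the endpoints of $e=\{v,v'\}$ and $\mathcal{D}$ is the product of the differential operators supplied by the local vertex functionals $I_{g_v,|v|}$. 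The key analytic input is the Minakshisundaram--Pleijel expansion: near the diagonal,
\[
k(x,y,t)\sim(4\pi t)^{-n/2}\,e^{-m^2t}\,e^{-\rho(x,y)^2/(4t)}\sum_{j=0}^{\infty}t^{j}a_j(x,y),\qquad t\to 0^{+},
\]
with $n:=\dim M$, $\rho$ the geodesic distance and $a_j$ smooth, while away from the diagonal $k$ vanishes to all orders as $t\to 0^{+}$. The decisive feature is the Gaussian factor $e^{-\rho^2/(4t)}$, which concentrates $k(\,\cdot\,,\cdot\,,t)$ on the diagonal at rate $\sqrt{t}$, so that divergences can arise only where some collection of proper times $t_e$ tends to zero and the corresponding vertices are forced to collide.

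For the first assertion I would fix $L$ and isolate the corner of $[\varepsilon,L]^{E(\Gamma)}$ where the $t_e$ approach zero, since only there can the $\varepsilon\to 0$ limit fail to exist. Rescaling the small proper times and integrating out the transverse directions against the Gaussians, then Taylor-expanding the smooth data $a_j$ and the vertex operators along the diagonal, produces term by term an expansion in powers of $\varepsilon$ and $\log\varepsilon$ whose coefficients are smooth in $L$ once paired with $\mathbf{a}$. Assembling these coefficients yields the asymptotic expansion claimed, with $\Psi_i\in\Hom(\ltens{\mathcal{E}}{L(\Gamma)},\smth{0,\infty})$; the existence of the controlling sequence $d_n$ required in the definition of $\asy{\cdot}$ is guaranteed by the finiteness of $\Gamma$, since each edge contributes only finitely many negative powers of $\varepsilon$.

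For the second assertion I would instead let $L\to 0$. Now every $t_e\le L$ is simultaneously small, so all heat-kernel factors concentrate at once and the integral over $M^{V(\Gamma)}$ localizes onto the small diagonal $M\hookrightarrow M^{V(\Gamma)}$. A Laplace-type (Gaussian) integration in the directions transverse to this diagonal, combined with Taylor expansion of the remaining smooth factors, collapses the amplitude to a single integral over $M$ of a differential operator applied to $\mathbf{a}$ --- that is, to an element of $\Hom_{\mathrm{loc}}(\ltens{\mathcal{E}}{L(\Gamma)},\gf)$ in the sense of our definition of locality --- while the powers of $L$ and $\log L$ released by the rescaling furnish the coefficient functions $f_{ij}\in\smth{0,\infty}$.

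The genuine difficulty, and the reason Costello devotes an entire appendix to this result, is to make both localizations rigorous when several subgraphs degenerate at once: the proper times may tend to zero in nested or overlapping patterns, reproducing exactly the overlapping divergences of the BPHZ analysis. I expect this to be the main obstacle. Controlling it requires a systematic decomposition of $[\varepsilon,L]^{E(\Gamma)}$ --- by sector decomposition, or equivalently by an iterated real blow-up of $M^{V(\Gamma)}$ along the loci where subsets of vertices coincide --- so that in each chart the singular behavior factorizes into one-variable scalings and the heat-kernel remainders admit uniform bounds. The uniformity of these estimates \emph{simultaneously} in $\varepsilon$ and $L$ is the technical heart of the argument, and is precisely what Theorem~4.0.2 of \S A1.4 in \cite{coseffthy} provides.
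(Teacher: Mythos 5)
You cannot really be compared against an in-paper argument here, because the paper deliberately gives none: Theorem \ref{thm_feynasy} is imported as a black box from Costello (Theorem 9.3.1 of \S 2.9 and Theorem 4.0.2 of \S A1.4 of \cite{coseffthy}), marked \emph{without proof}, precisely because the proof occupies a long technical appendix of that book. Measured against the cited proof, your outline reconstructs the correct strategy: unwinding $F_{\Gamma}(I,P(\varepsilon,L))[\mathbf{a}]$ into an integral over $M^{V(\Gamma)}\times[\varepsilon,L]^{E(\Gamma)}$ with differential operators at the vertices, feeding in the Minakshisundaram--Pleijel small-$t$ expansion with its Gaussian concentration on the diagonal, and, for part (2), using connectedness of $\Gamma$ so that the position integral collapses onto the small diagonal of $M^{V(\Gamma)}$ and produces local functionals $\psi_{ij}$. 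Your diagnosis that nested and overlapping degenerations of the proper times $t_e$ are the crux, to be tamed by a decomposition of $[\varepsilon,L]^{E(\Gamma)}$ into sectors, is likewise faithful to the source, where the edges are ordered by the size of $t_e$ and rescaled iteratively.

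As a standalone proof, however, the attempt has a genuine gap, and your final sentence concedes it: at the decisive step you invoke ``Theorem 4.0.2 of \S A1.4 in \cite{coseffthy}'', which \emph{is} the statement under proof, so the argument is circular. What is never carried out is exactly the content of that theorem: uniform bounds on the heat-kernel remainders on each sector, \emph{jointly} in $\varepsilon$ and $L$, together with the bookkeeping showing that the exponents $d_n$ tend to infinity. This joint control is not an optional refinement but is forced by part (2): the $\Psi_i$ are coefficients of the $\varepsilon$-expansion, functions of $L$ alone, so their small-$L$ locality cannot be read off by letting $L\to 0$ in the amplitude itself (as your second paragraph does --- note also that $t_e\in[\varepsilon,L]$ makes that limit at fixed $\varepsilon$ ill-posed); one must establish a simultaneous two-scale expansion and then extract the $\varepsilon$-coefficients from it. A smaller omission: the expansion is required in $\Hom(\ltens{\mathcal{E}}{L(\Gamma)},\smth{0,\infty})$ with the strong topology, while your localization is pointwise in $\mathbf{a}$; this does suffice, but only via the Montel-space argument recorded in the remark following Definition \ref{def_effthy}, which you should invoke explicitly.
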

\noproof

\subsection{Construction of an effective field theory} \label{sec_effthyconst}

We now recall Costello's formula \cite[\S 2.10]{coseffthy} for producing local counterterms for a local interaction and his construction of an effective field theory that naturally follows from it. We begin with Theorem 10.1.1 of \cite[\S 2.10]{coseffthy}, whose proof involves the repeated use of Lemma \ref{lem_basicid}.

\begin{theorem} \label{thm_counterterms}
Let $I\in\psloc{\mathcal{E}}[[\hbar]]$ be a local interaction, then there is a series of local counterterms,
\[ \Big[ I_{ij}^{CT}:\varepsilon\mapsto I_{ij}^{CT} (\varepsilon)\Big]\in\singfun\otimes\Hom_{\mathrm{loc}}\big(\mathcal{E}^{\cotimes j},\gf\big)/S_j, \]
satisfying,
\begin{equation} \label{eqn_counterterms}
I_{ij}^{CT} = \Sing\left[\varepsilon\mapsto W_{ij}\left(I-\sum_{(p,q)<(i, j)} \hbar^p I_{pq}^{CT}(\varepsilon),P(\varepsilon,L)\right)\right]
\end{equation}
for all $i, j\geq 0$ and such that the limit,
\[ \lim_{\varepsilon\to 0}\left[W\left(I-\sum_{i,j=0}^{\infty} \hbar^i I_{ij}^{CT}(\varepsilon),P(\varepsilon, L)\right)\right] \]
exists in $\powser{\mathcal{E}}[[\hbar]]$ for all $0 <L\leq\infty$.
\end{theorem}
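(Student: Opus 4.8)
The plan is to construct the family $I_{ij}^{CT}$ by transfinite induction over the well-ordering \eqref{eqn_indexorder}, taking the recursion \eqref{eqn_counterterms} itself as the definition and verifying at each stage that its right-hand side is meaningful and has the asserted properties. Fix $(i,j)$ and assume $I_{pq}^{CT}(\varepsilon)$ has been produced for every $(p,q)<(i,j)$; set $J(\varepsilon):=\sum_{(p,q)<(i,j)}\hbar^p I_{pq}^{CT}(\varepsilon)$. To see that $\Sing$ may legitimately be applied, I would expand $W_{ij}(I-J(\varepsilon),P(\varepsilon,L))$ by means of \eqref{eqn_feycomp}: because $J$ has no component in bidegree $(i,j)$, the leading term is simply $I_{ij}$ and the remainder is a finite sum, over the finitely many connected stable graphs of genus $i$ with $j$ legs, of the amplitudes $F_\Gamma(I-J(\varepsilon),P(\varepsilon,L))$. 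Expanding each amplitude multilinearly over its vertices, every vertex carries either a component of the local interaction $I$ or one of the lower counterterms $I_{pq}^{CT}(\varepsilon)=\sum_k s_k(\varepsilon)\phi_k$, with $s_k\in\singfun$ and $\phi_k$ local; the scalar factors $s_k(\varepsilon)$ pull out, and what remains is an amplitude all of whose vertices carry fixed local functionals. Theorem \ref{thm_feynasy}\eqref{item_feynasyexp} (applied with those fixed local decorations regarded as a local interaction) places each residual amplitude in $\asy{\cdot}$, and since the $s_k$ have finite-order poles and hence lie in $\asy{\gf}$, closure of $\asy{\cdot}$ under such scalar products shows the whole expression lies in $\asy{\Hom(\ltens{\mathcal{E}}{L(\Gamma)},\smth{0,\infty})}$, so that $\Sing$ is defined.

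I would next check that $I_{ij}^{CT}$ is local and independent of $L$. Locality is immediate from Theorem \ref{thm_feynasy}\eqref{item_feynasyloc}: the coefficients $\Psi_i$ of the small-$\varepsilon$ expansion are, up to their own small-$L$ expansions, local functionals, so the finite combination $\Sing(f)=\sum_i T(g_i)\Psi_i$ is again local. The genuinely delicate point, and the step I expect to be the main obstacle, is $L$-independence: the right-hand side of \eqref{eqn_counterterms} is built from the propagator $P(\varepsilon,L)$, yet the counterterm is required to lie in $\singfun\otimes\Hom_{\mathrm{loc}}(\mathcal{E}^{\cotimes j},\gf)/S_j$ with no dependence on $L$. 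Since $\Sing$ acts in the variable $\varepsilon$ and so commutes with $\partial_L$, it suffices to show that $\partial_L W_{ij}(I-J(\varepsilon),P(\varepsilon,L))$ has a finite limit as $\varepsilon\to 0$: then $\partial_L\Sing(f)=\Sing(\partial_L f)=0$. This is exactly the content secured by Costello's analysis of the joint small-$\varepsilon$, small-$L$ behaviour of heat-kernel amplitudes underlying Theorem \ref{thm_feynasy}; the useful structural input is that $\partial_L P(\varepsilon,L)$ is the heat kernel at time $L$ and so is independent of $\varepsilon$, but promoting this to convergence of the full partially renormalized amplitude requires the inductive control of the lower-order renormalized terms and is the heart of the matter.

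Granting well-definedness and locality, existence of the final limit drops out cleanly from Lemma \ref{lem_basicid}, which is the reason for ordering the induction by \eqref{eqn_indexorder}. Fix $(i,j)$ and decompose the full counterterm as $I^{CT}(\varepsilon)=J(\varepsilon)+K(\varepsilon)$, where $J=\sum_{(p,q)<(i,j)}\hbar^p I_{pq}^{CT}$ and $K=\sum_{(p,q)\geq(i,j)}\hbar^p I_{pq}^{CT}$ collects the terms of order $\geq(i,j)$. Applying Lemma \ref{lem_basicid} to the interaction $I-J(\varepsilon)$ and the series $K(\varepsilon)$, and extracting the bidegree-$(i,j)$ component, yields
\[ W_{ij}\big(I-I^{CT}(\varepsilon),P(\varepsilon,L)\big)=W_{ij}\big(I-J(\varepsilon),P(\varepsilon,L)\big)-I_{ij}^{CT}(\varepsilon). \]
By the defining formula \eqref{eqn_counterterms} the subtracted term is precisely $\Sing(f)$ for $f(\varepsilon)=W_{ij}(I-J(\varepsilon),P(\varepsilon,L))$, so the right-hand side equals $[f-\Sing(f)](\varepsilon)$, which by the last of the recorded properties of $\Sing$ has a limit as $\varepsilon\to 0$. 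As this holds in every bidegree $(i,j)$ and $\powser{\mathcal{E}}[[\hbar]]$ carries the product topology, the required limit exists for every $0<L\leq\infty$.
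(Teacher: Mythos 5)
Your overall architecture is the right one: the paper itself states Theorem \ref{thm_counterterms} without proof, deferring to Theorem 10.1.1 of \cite{coseffthy} and noting only that the argument ``involves the repeated use of Lemma \ref{lem_basicid}'', and both ends of your induction are correct — the verification that $\Sing$ may be applied (pulling the finite-order-pole scalars $s_k(\varepsilon)$ of the lower counterterms out of the vertex decorations and invoking Theorem \ref{thm_feynasy}\eqref{item_feynasyexp}), and the closing computation that $W_{ij}(I-I^{CT}(\varepsilon),P(\varepsilon,L))=[f-\Sing(f)](\varepsilon)$ converges, which is exactly the advertised repeated use of Lemma \ref{lem_basicid}. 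The genuine gap is the step you yourself call the heart of the matter: $L$-independence. Your reduction (via naturality of $\Sing$) to the convergence of $\partial_L W_{ij}(I-J(\varepsilon),P(\varepsilon,L))$ as $\varepsilon\to 0$ is legitimate, but you then justify that convergence only by appeal to ``Costello's analysis underlying Theorem \ref{thm_feynasy}'', and that theorem cannot supply it: it controls single-graph amplitudes of a \emph{fixed local} interaction and says nothing about the cancellation of subdivergences in the partially renormalized sum $W_{ij}(I-J(\varepsilon),\cdot)$. The missing mechanism is the renormalization-group decomposition $P(\varepsilon,L)=P(\varepsilon,L_0)+P(L_0,L)$: rearranging shows that the scale-$L$ quantity equals the scale-$L_0$ quantity plus terms contracting the $\varepsilon$-independent propagator $P(L_0,L)$ against quantities of bidegree $(p,q)<(i,j)$, and these converge as $\varepsilon\to 0$ precisely because $J(\varepsilon)$ already contains the counterterm $I^{CT}_{pq}$, so that by Lemma \ref{lem_basicid} the bidegree-$(p,q)$ part of $W(I-J(\varepsilon),P(\varepsilon,L_0))$ is nonsingular by induction. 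This is exactly the computation the paper carries out, in graph-by-graph form, in Equation \eqref{eqn_lind} in the proof of part \eqref{item_BPHZlind} of its lemma on the BPHZ counterterms. Your differential variant could be completed the same way — the heat-equation form of the renormalization group shows $\partial_L W_{ij}$ is the $\varepsilon$-independent kernel $k_L$ contracted against components $W_{pq}(I-J(\varepsilon),P(\varepsilon,L))$ with $(p,q)<(i,j)$ only — but that inductive convergence argument is the actual content of the step and is absent from your write-up; the remark that $\partial_L P(\varepsilon,L)$ is $\varepsilon$-independent is necessary but nowhere near sufficient.

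There is a second, smaller but real, defect: you assert that locality is ``immediate'' from Theorem \ref{thm_feynasy}\eqref{item_feynasyloc} and place it \emph{before} $L$-independence. At fixed $L$ the coefficients $\Psi_i$ of the small-$\varepsilon$ expansion take values in $\Hom(\ltens{\mathcal{E}}{L(\Gamma)},\smth{0,\infty})$ and are only \emph{asymptotically} local as $L\to 0$; they need not be local at any fixed $L$, so $\sum_i T(g_i)\Psi_i$ is not yet known to be local. The deductions must run in the opposite order: first establish $L$-independence, then choose the $\Psi_i$ independent of $L$, write $\Psi_i=\lim_{L\to 0}\bigl[\sum_{j} f_{ij}(L)\psi_{ij}\bigr]$ with the $\psi_{ij}$ local, and conclude locality because this limit is taken inside the closed, finite-dimensional span of the $\psi_{ij}$ — the exact maneuver in the paper's proof of part \eqref{item_BPHZlocal} of the same lemma. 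With these two repairs (the RGE rearrangement for $L$-independence, and the reordering of the locality deduction) your induction becomes a sound, self-contained proof along the same lines as Costello's.
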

\noproof

\begin{rem}
Equation \eqref{eqn_counterterms} may be taken as an inductive definition of the counterterms $I_{ij}^{CT}$, where we remind the reader that we have used the well-ordering on the index set defined by \eqref{eqn_indexorder}. According to this equation, $I_{ij}^{CT}(\varepsilon)$ should lie in the space $\Hom(\mathcal{E}^{\cotimes j},\smth{0,\infty})/S_j$, due to this expression's dependence on $L$. Part of Costello's proof of Theorem \ref{thm_counterterms} involves showing that this expression does not in fact depend on $L$, that is $I_{ij}^{CT}(\varepsilon)$ lies in $\Hom(\mathcal{E}^{\cotimes j},\gf)/S_j$.
\end{rem}

\begin{rem}
We can say a little more about these counterterms in fact. The counterterms $I_{ij}^{CT}$ are finite sums of terms of the form $g\cdot\psi$ where $\psi$ is a local functional and $g$ is a purely singular function \emph{with a finite order pole at zero}. This ensures, by Theorem \ref{thm_feynasy}, that the expression in Equation \eqref{eqn_counterterms} has an asymptotic expansion and hence that its singular part is well-defined.
\end{rem}

The construction of the counterterms \eqref{eqn_counterterms} leads naturally to the definition of an effective field theory satisfying all the requirements of Definition \ref{def_effthy}. This is described in \cite[\S 2.11]{coseffthy} and summarized by the following theorem.

\begin{theorem} \label{thm_effthy}
Let $I\in\psloc{\mathcal{E}}[[\hbar]]$ be a local interaction and define a family of effective interactions by
\begin{equation} \label{eqn_effthy}
I[L]:=\lim_{\varepsilon\to 0}\left[ W\left( I-\sum_{i,j=0}^\infty \hbar^iI_{ij}^{CT}(\varepsilon),P(\varepsilon,L) \right) \right].
\end{equation}
Then this family of interactions forms an effective field theory satisfying the axioms of Definition \ref{def_effthy}.
\end{theorem}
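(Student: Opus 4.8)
The plan is to verify directly that the family $I[L]$ defined by \eqref{eqn_effthy} satisfies the two conditions of Definition \ref{def_effthy}; that $I[L]$ is a well-defined element of $\powser{\mathcal{E}}[[\hbar]]$ for each $0<L\leq\infty$ is precisely the content of the final assertion of Theorem \ref{thm_counterterms}. Throughout I would abbreviate the renormalized interaction by $J(\varepsilon):=I-\sum_{i,j}\hbar^i I_{ij}^{CT}(\varepsilon)$, so that $I[L]=\lim_{\varepsilon\to 0}W(J(\varepsilon),P(\varepsilon,L))$.

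For the renormalization group equation, the essential structural input is the semigroup property of $W$, namely $W(W(-,P_1),P_2)=W(-,P_1+P_2)$ for two propagators $P_1,P_2$. I would prove this directly from the nondiagrammatic formula \eqref{eqn_feynographs}: the operator $\partial_P$ is linear in $P$, and the second order operators $\partial_{P_1}$ and $\partial_{P_2}$ commute, so $\exp(\hbar\partial_{P_2})\exp(\hbar\partial_{P_1})=\exp(\hbar\partial_{P_1+P_2})$, whence the claim follows by applying these operators to $\exp(I/\hbar)$ and comparing through \eqref{eqn_feynographs}. Combined with the additivity $P(\varepsilon,L_1)+P(L_1,L_2)=P(\varepsilon,L_2)$, immediate from the definition of the propagator as an integral of the heat kernel over proper time, this yields $W(W(J(\varepsilon),P(\varepsilon,L_1)),P(L_1,L_2))=W(J(\varepsilon),P(\varepsilon,L_2))$. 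It then remains to pass the limit $\varepsilon\to 0$ through the outer application of $W(-,P(L_1,L_2))$, giving $W(I[L_1],P(L_1,L_2))=\lim_{\varepsilon\to 0}W(J(\varepsilon),P(\varepsilon,L_2))=I[L_2]$. This interchange is legitimate because, for each fixed $(i,j)$, the component $W_{ij}(-,P(L_1,L_2))$ depends through \eqref{eqn_feycomp} on only finitely many components of its argument, via contraction against the fixed smooth tensor $P(L_1,L_2)$, and these contractions are continuous.

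For the asymptotic locality requirement I would lean on Costello's Theorem \ref{thm_feynasy}. Expanding $W_{ij}(J(\varepsilon),P(\varepsilon,L))$ via \eqref{eqn_feycomp} and using the multilinearity of each Feynman amplitude $F_\Gamma$ in its vertex decorations, one obtains a finite sum of terms, each a product of purely singular functions of $\varepsilon$ (coming from the counterterms, which are finite sums of a singular function with finite order pole times a local functional) with a Feynman amplitude of the form $F_\Gamma(\text{local},P(\varepsilon,L))$. By Theorem \ref{thm_feynasy} each such amplitude lies in $\asy{\Hom(\ltens{\mathcal{E}}{L(\Gamma)},\smth{0,\infty})}$ with small $\varepsilon$ expansion coefficients $\Psi_k(L)$ that themselves admit small $L$ expansions in local functionals; multiplying by singular functions with finite order poles preserves membership in $\asy{\cdot}$, so $W_{ij}(J(\varepsilon),P(\varepsilon,L))$ has a small $\varepsilon$ expansion $\sum_k g_k(\varepsilon)\Psi_k(L)$ of the same type, with the $g_k$ independent of $L$. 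Since the limit $\lim_{\varepsilon\to 0}W_{ij}(J(\varepsilon),P(\varepsilon,L))=I_{ij}[L]$ exists, only finitely many of the $g_k$ fail to vanish at zero, so $I_{ij}[L]$ equals a finite linear combination $\sum_k c_k\Psi_k(L)$ with $L$-independent coefficients $c_k=\lim_{\varepsilon\to 0}g_k(\varepsilon)$. As each $\Psi_k(L)$ has a small $L$ asymptotic expansion in local functionals, and this class is closed under finite linear combinations, $I_{ij}[L]$ inherits such an expansion, which is exactly condition (2).

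The main obstacle I anticipate lies in the careful handling of the two nested asymptotic regimes in the second part. One must ensure that the multilinear expansion of $W_{ij}(J(\varepsilon),P(\varepsilon,L))$ keeps the singular $\varepsilon$-dependence of the counterterms cleanly separated from the $\Psi_k(L)$, so that Theorem \ref{thm_feynasy} applies term by term and the $\varepsilon\to 0$ limit genuinely reduces to taking limits of the scalar functions $g_k(\varepsilon)$ while $L$ is held fixed. The secondary point, justifying the interchange of $\lim_{\varepsilon\to 0}$ with $W(-,P(L_1,L_2))$ in the renormalization group equation, rests on the finiteness of the graph sum in \eqref{eqn_feycomp} at each order in $\hbar$ and in the number of legs together with the continuity of tensor contraction against a fixed smooth propagator, and so should be routine once phrased correctly.
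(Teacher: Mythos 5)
The paper states this theorem without proof (it appears with an empty proof box, deferring to \cite[\S 2.11]{coseffthy}), so the relevant comparison is with Costello's argument rather than with anything written in the text; your proposal reconstructs that argument faithfully and correctly. The renormalization group equation via the semigroup identity $W(W(I,P_1),P_2)=W(I,P_1+P_2)$, proved from the exponential formula \eqref{eqn_feynographs} using commutativity of $\partial_{P_1}$ and $\partial_{P_2}$, is exactly Lemma 3.4.1 of \cite[\S 2.3]{coseffthy}, which the paper itself points to in the remark following Theorem \ref{thm_main}; combined with the additivity $P(\varepsilon,L_1)+P(L_1,L_2)=P(\varepsilon,L_2)$ and the order-by-order continuity of $W_{ij}(-,P(L_1,L_2))$ through the finite graph sum \eqref{eqn_feycomp}, this is the standard way to pass the limit $\varepsilon\to 0$, and your appeal to Theorem \ref{thm_feynasy} for the locality axiom is likewise the intended route. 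Two points deserve tightening, though neither is a gap in substance. First, in deducing $I_{ij}[L]=\sum_k c_k\Psi_k(L)$ you assert that the individual scalar functions $g_k(\varepsilon)$ converge; because cancellations among linearly dependent $\Psi_k$ are possible, this requires either normalizing the expansion so the $\Psi_k$ are linearly independent (and separating them with continuous functionals), or arguing only that the limit lies in the finite-dimensional, hence closed, span of $\Psi_0(L),\ldots,\Psi_n(L)$ in $\Hom(\mathcal{E}^{\cotimes j},\smth{0,\infty})$ --- precisely the device the paper uses in part (3) of its lemma on the BPHZ counterterms --- after which the small $L$ expansion of $I_{ij}[L]$ follows since each $\Psi_k$ admits one. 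Second, closure of the class of functions with small $L$ asymptotic expansions under finite linear combinations requires merging expansions with different gauge sequences $d_n$; this is routine (interleave and re-index so the merged sequence is still nondecreasing and tends to infinity) but worth a sentence, as is the observation you already make that multiplication by a purely singular function with a finite order pole merely shifts the $d_n$ and so preserves membership in $\asy{\cdot}$.
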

\noproof

\section{Graphical Combinatorics and the BPHZ algorithm} \label{sec_BPHZ}

In this section we give a precise description of the BPHZ algorithm \cite{bogpar}, \cite{hepp}, \cite{zimmer} as it applies to our situation, following closely the description provided by Collins \cite[\S 5]{collins}, and prove some basic properties of the counterterms that it produces. Typically, the BPHZ algorithm is applied when working in a momentum space formulation. Here, we take the slightly unconventional approach of applying it to our position space formulation. This decision will be entirely justified when we later prove in Theorem \ref{thm_main} that we recover Costello's construction \eqref{eqn_effthy} of an effective field theory in this way.

\subsection{Basic definitions}

We start by collecting some basic definitions concerning operations on graphs. The operations that we introduce on graphs, namely contracting and inserting subgraphs, are standard operations in the theory of operads, cf. \cite{opalgtophys}. They were also used in the descriptions \cite{ckhopfI}, \cite{ckinsert} of the BPHZ algorithm through Hopf algebras. We begin by recalling from \cite{modop} how to contract edges and loops in a stable graph.

\begin{defi}
Suppose that $\Gamma$ is a stable graph and $e\in E(\Gamma)$ is an edge.
\begin{enumerate}
\item
If $e$ is a loop then $\Gamma/e$ is the stable graph that results by throwing $e$ away and increasing the genus of the incident vertex by $1$.
\item
If $e$ is not a loop then we form $\Gamma/e$ by contracting the edge $e$ and coalescing the two incident vertices into a single vertex, whose genus is the sum of the genera of the two incident vertices.
\end{enumerate}
\end{defi}

\begin{rem}
Note that $\Gamma/e$ is not defined if this would force the new vertex that is formed by contracting the edge or loop $e$ to be empty.
\end{rem}

We make the following definition of a subgraph.

\begin{defi}
A subgraph $\gamma$ of a stable graph $\Gamma$ is just a subset of the edges $E(\Gamma)$ of $\Gamma$. We say that $\gamma$ is a \emph{proper subgraph} if it is a proper subset of $E(\Gamma)$. If every connected component of $\Gamma$ has a nonempty set of legs, or if $\Gamma$ is connected and $\gamma$ is a proper subgraph, then since it does not matter which order we contract edges in, we may define $\Gamma/\gamma$ to be the stable graph obtained by contracting all the edges of $\gamma$, see Figure \ref{fig_contractgraph}.

To any subgraph $\gamma$ of $\Gamma$ we may associate an actual stable graph in the sense of Definition \ref{def_stabgraph}, which is pictured on the left of Figure \ref{fig_contractgraph}. By an abuse of notation, we will denote this stable graph by the same symbol $\gamma$. It is defined as follows:
\begin{itemize}
\item
By definition, the subgraph $\gamma$ specifies a subset of the edges of $E(\Gamma)$. We define $E(\gamma)$ to be this subset.
\item
The vertices of $\gamma$ consist of all those vertices of $\Gamma$ intersecting the subgraph,
\[ V(\gamma):=\Big\{v\in V(\Gamma): v\cap\Big(\underset{e\in E(\gamma)}{\cup} e\Big)\neq\emptyset \Big\}. \]
These vertices have the same genus as those of the original graph $\Gamma$.
\item
This determines the half-edges and legs of $\gamma$;
\[ H(\gamma):=\underset{v\in V(\gamma)}{\cup} v, \qquad L(\gamma):=H(\gamma)-\underset{e\in E(\gamma)}{\cup} e. \]
\end{itemize}
\end{defi}

\begin{figure}[h]
\includegraphics{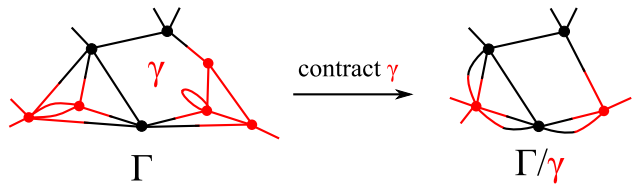}
\caption{Contracting the red subgraph $\gamma$ yields the graph on the right. The legs of the components of $\gamma$ form vertices in this new graph.}
\label{fig_contractgraph}
\end{figure}

Suppose that $\Gamma$ is a stable graph and that $\gamma=\cup_{i\in\mathcal{I}}\gamma_i$ is a subgraph of $\Gamma$ that we have written as a union of its connected components $\gamma_i$. Provided we are allowed to contract the subgraph $\gamma$, it is easy to see (and not hard to prove) that:
\begin{itemize}
\item
the legs of each connected component $\gamma_i$ form a vertex of $\Gamma/\gamma$;
\begin{equation} \label{eqn_legcontract}
\begin{array}{ccc}
\text{Components of } \gamma & \to & V(\Gamma/\gamma), \\
\gamma_i & \mapsto & v_{\gamma_i}:=L(\gamma_i);
\end{array}
\end{equation}
\item
that the image $\mathscr{V}_{\gamma}\subset V(\Gamma/\gamma)$ of the map \eqref{eqn_legcontract} satisfies,
\begin{equation} \label{eqn_vertexcontract}
V(\Gamma/\gamma) - \mathscr{V}_{\gamma} = V(\Gamma)-V(\gamma);
\end{equation}
\item
that the map \eqref{eqn_legcontract} respects genera, i.e. that the genus of a vertex formed by contracting a connected subgraph is the genus of that subgraph,
\[ g^{\Gamma/\gamma}(v_{\gamma_i}) = g(\gamma_i), \quad i\in\mathcal{I}; \]
\item
that the identity \eqref{eqn_vertexcontract} also respects genera, i.e. that the genus of a vertex that survives in $\Gamma/\gamma$ is the same as in the original graph $\Gamma$,
\[ g^{\Gamma}(v) = g^{\Gamma/\gamma}(v), \quad \text{for all } v\in V(\Gamma/\gamma)-\mathscr{V}_{\gamma}. \]
\end{itemize}

These facts may be observed in Figure \ref{fig_contractgraph}. The inverse operation to contracting a subgraph is inserting a graph.

\begin{defi} \label{def_graphinsert}
Suppose that $\Gamma$ and $\gamma=\cup_{i\in\mathcal{I}}\gamma_i$ are stable graphs, where $\Gamma$ is connected and $\gamma$ is written as a union of its connected components $\gamma_i$, each of which have nonempty sets of legs and edges. Suppose further that there is an injective map $\tau: L(\gamma)\to H(\Gamma)$ satisfying:
\begin{enumerate}
\item \label{item_legstovert}
$\tau$ maps the legs of connected components to vertices,
\[ \tau(L(\gamma_i))\in V(\Gamma), \quad\text{for all }i\in\mathcal{I} ;\]
\item \label{item_genusglue}
that this mapping respects the genus,
\[ g^{\Gamma}(\tau(L(\gamma_i)))=g(\gamma_i), \quad\text{for all }i\in\mathcal{I}. \]
\end{enumerate}
From this we may define\footnote{We spare the reader the precise details of a formal construction in terms of edges, vertices and so forth, which are straightforward anyway, since any two such formal constructions must after all produce naturally isomorphic functors in Theorem \ref{thm_eqvcat}.} a connected stable graph $\Gamma\circ_{\tau}\gamma$ by simply inserting the graph $\gamma$ inside $\Gamma$ using the map $\tau$.

It follows from condition \eqref{item_genusglue} that $g(\Gamma\circ_{\tau}\gamma)=g(\Gamma)$. There is a natural subgraph of $\Gamma\circ_{\tau}\gamma$ defined by the edges of $\gamma$. By an abuse of notation, we will denote this subgraph by the same symbol $\gamma$. This is justified as the stable graph that is associated to this subgraph is isomorphic to $\gamma$.
\end{defi}

Consider the following two categories. The objects of the first category $\cpairs$ are pairs $(\Gamma,\gamma)$, where $\Gamma$ is a connected stable graph and $\gamma$ is a proper subgraph of $\Gamma$. A morphism in this category is an isomorphism of stable graphs that preserves the proper subgraphs.

The objects of the second category $\ctrips$ are triples $(\Gamma,\tau,\gamma)$; where $\Gamma$, $\tau$ and $\gamma$ satisfy the requirements of Definition \ref{def_graphinsert}, plus the additional requirement that the edges of $\Gamma$ are nonempty. Morphisms in this category consist of a pair of isomorphisms between stable graphs which commute with the maps defined by the $\tau$.

The statement that the contraction operation is inverse to insertion is justified by the following theorem (which is also much more precise).

\begin{theorem} \label{thm_eqvcat}
There is an equivalence of categories;
\begin{displaymath}
\begin{array}{ccc}
\ctrips & \rightleftharpoons & \cpairs, \\
(\Gamma,\tau,\gamma) & \mapsto & (\Gamma\circ_{\tau}\gamma,\gamma), \\
(\Gamma/\gamma,i,\gamma) & \mapsfrom & (\Gamma,\gamma);
\end{array}
\end{displaymath}
where $i$ denotes the obvious inclusion map.
\end{theorem}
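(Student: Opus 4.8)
The plan is to exhibit the two assignments in the statement as mutually quasi-inverse functors $F:\ctrips\to\cpairs$ and $G:\cpairs\to\ctrips$, given on objects by $F(\Gamma,\tau,\gamma)=(\Gamma\circ_{\tau}\gamma,\gamma)$ and $G(\Gamma,\gamma)=(\Gamma/\gamma,i,\gamma)$, and then to produce natural isomorphisms $FG\cong\id_{\cpairs}$ and $GF\cong\id_{\ctrips}$. First I would check that these assignments land in the correct categories. For $F$, the graph $\Gamma\circ_{\tau}\gamma$ is connected by Definition \ref{def_graphinsert}, and $\gamma$ is a \emph{proper} subgraph of it precisely because the defining condition of $\ctrips$ forces $E(\Gamma)\neq\emptyset$, so that after insertion there is at least one edge lying outside $\gamma$. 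For $G$, the quotient $\Gamma/\gamma$ is connected and has edge set $E(\Gamma)\setminus E(\gamma)$, which is nonempty since $\gamma$ is proper; the remaining requirements of Definition \ref{def_graphinsert} for the triple $(\Gamma/\gamma,i,\gamma)$---that $i\colon L(\gamma)\to H(\Gamma/\gamma)$ be injective, carry the legs of each component $\gamma_i$ to a single vertex, and respect genera---are read off directly from \eqref{eqn_legcontract} and the two genus-compatibility bullet points following it. One small point to dispatch here is that each component $\gamma_i$ has nonempty legs: since $\Gamma$ is connected and $\gamma$ is proper, a component with all of its half-edges internal would be disconnected from the rest of $\Gamma$ and would force $E(\gamma)=E(\Gamma)$, so every component must carry a half-edge joining it to the remainder, which is by definition a leg. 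This is exactly what makes the vertices $v_{\gamma_i}$ of $\Gamma/\gamma$ nonempty and legitimizes the contraction in the first place.

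The heart of the argument is the construction of the two natural isomorphisms, which is the precise sense in which ``insertion is inverse to contraction.'' For $FG$ I would build, for each $(\Gamma,\gamma)\in\cpairs$, a canonical isomorphism $(\Gamma/\gamma)\circ_i\gamma\xrightarrow{\sim}\Gamma$ of stable graphs carrying the distinguished subgraph $\gamma$ to $\gamma$: inserting the contracted pieces back along $i$ re-expands each vertex $v_{\gamma_i}=L(\gamma_i)$ of $\Gamma/\gamma$ into the component $\gamma_i$, and the resulting identification of half-edges, edges and vertices with those of $\Gamma$ is forced. Matching of the genus labels is exactly the content of the two genus-respecting bullet points: surviving vertices keep their genus by $g^{\Gamma}(v)=g^{\Gamma/\gamma}(v)$, and each re-expanded vertex recovers the genus $g(\gamma_i)$ of its component. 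Symmetrically, for $GF$ I would produce, for each $(\Gamma,\tau,\gamma)\in\ctrips$, a canonical isomorphism $(\Gamma\circ_{\tau}\gamma)/\gamma\xrightarrow{\sim}\Gamma$ under which the inclusion $i$ of Definition \ref{def_graphinsert} is identified with $\tau$; here one uses \eqref{eqn_vertexcontract} to match the vertices of $\Gamma$ with those of the quotient and the genus-compatibility to see that $\tau$ coincides with the leg-to-vertex map \eqref{eqn_legcontract} of the contraction.

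With the object-level isomorphisms in hand, I would verify functoriality on morphisms and naturality together. A morphism in $\ctrips$ is a compatible pair of stable-graph isomorphisms commuting with the $\tau$'s; since both insertion and contraction are themselves functorial for bijection-isomorphisms (the same functoriality already invoked for $\ltens{\mathcal{V}}{\cdot}$ and used implicitly in Definition \ref{def_graphinsert}), such a pair is carried by $F$ to an isomorphism of pairs and by $G$ to an isomorphism of triples, and the squares expressing naturality of the two isomorphisms above commute by inspection at the level of half-edges. This reduces everything to checking commuting diagrams of finite sets together with their genus data, which is routine.

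I expect the genuine difficulty to be organizational rather than conceptual. Because the footnote to Definition \ref{def_graphinsert} deliberately leaves the formal construction of $\Gamma\circ_{\tau}\gamma$ specified only up to canonical isomorphism, the cleanest formulation is to treat $FG\cong\id$ and $GF\cong\id$ as honest natural isomorphisms (a unit and a counit) rather than as strict equalities, and to resist fixing rigid models of the half-edge sets. The main obstacle, then, is to package the half-edge, edge and vertex bijections so that the genus labeling and the subgraph-or-$\tau$ datum are tracked \emph{simultaneously} and shown to be natural; once the genus-compatibility statements surrounding \eqref{eqn_legcontract} and \eqref{eqn_vertexcontract} are in place, no further analytic or combinatorial input is required and the verification is a bookkeeping exercise.
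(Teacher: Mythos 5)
Your proposal is correct and follows essentially the same route as the paper, which likewise exhibits the two assignments as quasi-inverse functors using the observations surrounding \eqref{eqn_legcontract} and \eqref{eqn_vertexcontract}, with the isomorphism $GF\cong\id$ built from $\tau$. The only difference is presentational: the paper takes the composite starting from $\cpairs$ to be the identity on the nose (implicitly fixing a model of $\Gamma\circ_{\tau}\gamma$), whereas you treat both composites as natural isomorphisms, which is the more robust reading given the footnote to Definition \ref{def_graphinsert}; your added well-definedness checks (nonempty legs of components, properness of $\gamma$ after insertion) are correct and correspond to what the paper calls ``a simple check.''
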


\begin{proof}
The proof of this theorem is a simple check using the observations that we made earlier in this section. Starting on the right, the composition of the two functors is the identity. Starting on the left, an isomorphism from the identity functor to the composition may be constructed from the map $\tau$.
\end{proof}

Next, we will introduce a definition that generalizes the definition of the Feynman amplitude from Definition \ref{def_feynamp}. This will explain how to define a Feynman amplitude in which we replace a subgraph of the stable graph with a different expression.

\begin{defi}
A functional $\nu$ on stable graphs is a mapping that assigns to every stable graph $\Gamma$, an element
\[ \nu_{\Gamma}\in\ltens{\mathcal{E}^{\dag}}{L(\Gamma)}. \]
We will say that a functional $\nu$ is \emph{equivariant} if for every isomorphism $\phi:\Gamma\to\Gamma'$ of stable graphs we have,
\[ \nu_{\Gamma'} = \phi^{\#}(\nu_{\Gamma}). \]
A functional $\nu$ will be called \emph{multiplicative} if
\[ \nu_{\Gamma\sqcup\Gamma'} = \nu_{\Gamma}\otimes\nu_{\Gamma'}. \]
\end{defi}

\begin{rem}
An example of an equivariant multiplicative functional on stable graphs is provided by the usual Feynman amplitude from Definition \ref{def_feynamp}.
\end{rem}

\begin{defi} \label{def_subfamp}
Suppose that $\nu$ is a functional on stable graphs, $I\in\powser{\mathcal{E}}[[\hbar]]$ is an interaction and $P\in\mathcal{E}\cotimes\mathcal{E}$ is a propagator. Suppose that $\Gamma$ is a stable graph and that $\gamma$ is a subgraph of $\Gamma$ for which $\Gamma/\gamma$ is duly defined. The functional $\nu$ defines a tensor
\[ \nu_{\gamma}\in\ltens{\mathcal{E}^{\dag}}{L(\gamma)}=\mathcal{E}^{\dag}\bigg(\!\!\bigg( \bigcup_{v\in\mathscr{V}_{\gamma}}v \bigg)\!\!\bigg), \]
where $\mathscr{V}_{\gamma}\subset V(\Gamma/\gamma)$ was defined by \eqref{eqn_legcontract}. Combining this with the tensor
\[ \omega:=\bigotimes_{v\in V(\Gamma/\gamma)-\mathscr{V}_{\gamma}} \left[\widetilde{I}_{g_v,|v|}\right]\in \mathcal{E}^{\dag}\bigg(\!\!\bigg( \bigcup_{v\in V(\Gamma/\gamma)-\mathscr{V}_{\gamma}}v \bigg)\!\!\bigg), \]
where $\widetilde{I}_{ij}$ was defined by \eqref{eqn_itilde}, yields a tensor $\nu_{\gamma}\otimes\omega\in\ltens{\mathcal{E}^{\dag}}{H(\Gamma/\gamma)}$. Evaluating this tensor on the tensor
\[ \bigotimes_{e\in E(\Gamma/\gamma)} \left[ P \right]\in \mathcal{E}\bigg(\!\!\bigg( \bigcup_{e\in E(\Gamma/\gamma)} e \bigg)\!\!\bigg) \]
yields an element,
\[ F_{(\Gamma,\gamma;\nu)}(I,P)\in\ltens{\mathcal{E}^{\dag}}{L(\Gamma/\gamma)}=\ltens{\mathcal{E}^{\dag}}{L(\Gamma)}. \]
This defines the Feynman amplitude of $\Gamma$ with subgraph $\gamma$ replaced by $\nu$.
\end{defi}

\begin{rem}
When $\gamma$ is the empty subgraph, this is just the usual Feynman amplitude \eqref{eqn_feynamp}, providing $\nu_{\emptyset}=1$.
\end{rem}

\subsection{The BPHZ algorithm}

We now use Definition \ref{def_subfamp} to give a precise formulation of the BPHZ algorithm in our framework, following closely the description provided in Collins' textbook \cite[\S 5]{collins} which was used by Connes and Kreimer in \cite{ckhopfI}. We mention that the cited sources work in a momentum space representation, whilst we work in position space. Notwithstanding this detail, we will see that the combinatorics involved are the same.

The counterterms produced will be parameter dependent multiplicative functionals on stable graphs,
\[ C_*(\varepsilon,L):\Gamma\mapsto C_{\Gamma}(\varepsilon,L)\in\ltens{\mathcal{E}^{\dag}}{L(\Gamma)};\qquad \varepsilon\in (0,1), L\in (0,\infty). \]
As in \eqref{eqn_epsilonmap}, we may consider these counterterms as maps,
\begin{displaymath}
\begin{array}{ccc}
(0,1) & \to & \Hom(\ltens{\mathcal{E}}{L(\Gamma)},\smth{0,\infty}), \\
\varepsilon & \mapsto & \left[\mathbf{a}\mapsto (L\mapsto C_{\Gamma}(\varepsilon,L)[\mathbf{a}]) \right];
\end{array}
\end{displaymath}
which we will denote by $C_{\Gamma}$. This statement of course implies that the functional $C_*(\varepsilon,L)$ will vary smoothly with the parameter $L$.

Since $C_*(\varepsilon, L)$ is a multiplicative functional, it suffices to define it on connected stable graphs.

\begin{defi} \label{def_BPHZ}
Let $I\in\psloc{\mathcal{E}}[[\hbar]]$ be a local interaction. If $\Gamma$ is a connected stable graph, then $C_{\Gamma}(\varepsilon,L)$ is defined inductively on the number of edges of $\Gamma$ so that it satisfies the defining equation,
\[ C_{\Gamma} = -\Sing\left[ F_{\Gamma}(I,P(\varepsilon,L)) + \sum_{\emptyset\varsubsetneq\gamma\varsubsetneq\Gamma} F_{(\Gamma,\gamma;C_*(\varepsilon,L))}(I,P(\varepsilon,L)) \right]; \]
where the sum is taken over all nonempty proper subgraphs $\gamma$ of $\Gamma$ and we consider the terms inside the square brackets as elements of $\asy{\Hom(\ltens{\mathcal{E}}{L(\Gamma)},\smth{0,\infty})}$.

To ease notation, it is customary to define the expression inside the parentheses as,
\[ \overline{R}_{\Gamma}(\varepsilon,L):= F_{\Gamma}(I,P(\varepsilon,L)) + \sum_{\emptyset\varsubsetneq\gamma\varsubsetneq\Gamma} F_{(\Gamma,\gamma;C_*(\varepsilon,L))}(I,P(\varepsilon,L))\in\ltens{\mathcal{E}^{\dag}}{L(\Gamma)}. \]
We denote the corresponding element of $\asy{\Hom(\ltens{\mathcal{E}}{L(\Gamma)},\smth{0,\infty})}$ by $\overline{R}_{\Gamma}$. Then the defining equation becomes simply,
\begin{equation} \label{eqn_BPHZdef}
C_{\Gamma} = -\Sing\left[\overline{R}_{\Gamma}\right].
\end{equation}
\end{defi}

Technically, our definition of the counterterms $C_{\Gamma}$ is not complete because we have not explained why $\overline{R}_{\Gamma}$ has an asymptotic expansion. This will follow from Theorem \ref{thm_feynasy} and the following lemma. In particular, the counterterms $C_{\Gamma}(\varepsilon,L)$ should be \emph{local} in order to satisfy the hypothesis of Theorem \ref{thm_feynasy}. Of course, one of the lauded features of the BPHZ algorithm is its propensity to produce local counterterms, so this part of the lemma will come as no surprise.

\begin{lemma}
Let $I\in\psloc{\mathcal{E}}[[\hbar]]$ be a local interaction.
\begin{enumerate}
\item \label{item_BPHZnonsing}
If $\Gamma$ is a stable graph such that either $\Gamma$ is connected or every connected component of $\Gamma$ has a nonempty set of legs and edges then,
\[ \Sing(\overline{R}_{\Gamma}+C_{\Gamma}) = 0; \]
hence the expression $\overline{R}_{\Gamma}(\varepsilon, L)+C_{\Gamma}(\varepsilon, L)$ converges as $\varepsilon\to 0$.
\item \label{item_BPHZlind}
The counterterms $C_{\Gamma}(\varepsilon,L)$ do not depend on the parameter $L$, that is they are functions,
\[ C_{\Gamma}: (0,1)\to\Hom(\ltens{\mathcal{E}}{L(\Gamma)},\gf). \]
\item \label{item_BPHZlocal}
The counterterms $C_{\Gamma}$ are local for every connected stable graph $\Gamma$, in fact
\[ C_{\Gamma}\in\singfun\otimes\Hom_{\mathrm{loc}}(\ltens{\mathcal{E}}{L(\Gamma)},\gf). \]
More precisely, we can write $C_{\Gamma}$ as a finite sum,
\[ C_{\Gamma} = \sum_{i=1}^N g_i\Psi_i, \]
where the $\Psi_i$ are local functionals and the $g_i$ are purely singular functions of $\varepsilon$ \emph{with finite order poles at zero}.
\end{enumerate}
\end{lemma}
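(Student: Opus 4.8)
The plan is to prove all three statements simultaneously by a single induction on the number of edges $|E(\Gamma)|$, treating connected graphs first and deducing the disconnected case of \eqref{item_BPHZnonsing} from multiplicativity. Since the counterterms are defined inductively by \eqref{eqn_BPHZdef}, this is the natural induction; along the way I would also have to discharge the point flagged just before the lemma, namely that $\overline{R}_{\Gamma}$ genuinely lies in $\asy{\Hom(\ltens{\mathcal{E}}{L(\Gamma)},\smth{0,\infty})}$ so that $\Sing$ is defined. The inductive hypothesis I would carry is that for every connected stable graph with fewer edges than $\Gamma$, all three assertions \eqref{item_BPHZnonsing}, \eqref{item_BPHZlind} and \eqref{item_BPHZlocal} already hold.

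For \eqref{item_BPHZnonsing} in the connected case there is nothing to do beyond the formal properties of $\Sing$: from \eqref{eqn_BPHZdef} we get $\overline{R}_{\Gamma}+C_{\Gamma}=\overline{R}_{\Gamma}-\Sing[\overline{R}_{\Gamma}]$, and idempotency of $\Sing$ gives $\Sing(\overline{R}_{\Gamma}+C_{\Gamma})=0$; the stated convergence as $\varepsilon\to 0$ is then exactly the second of the bulleted properties of $\Sing$. For the disconnected case I would write $\Gamma=\bigsqcup_i\Gamma_i$ and use that $F$, the functional $C_*$, and hence the modified amplitude $F_{(\Gamma,\gamma;C_*)}$ of Definition \ref{def_subfamp} are all multiplicative, together with the fact that a subgraph of a disjoint union is a disjoint union of subgraphs. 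This factorizes the fully subtracted amplitude as $\overline{R}_{\Gamma}+C_{\Gamma}=\bigotimes_i(\overline{R}_{\Gamma_i}+C_{\Gamma_i})$; each factor converges by the connected case, so the product does too by continuity of $\cotimes$, whence $\Sing=0$.

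To make sense of $\Sing[\overline{R}_{\Gamma}]$ and simultaneously obtain the structure in \eqref{item_BPHZlocal}, I would feed the inductive hypothesis into Theorem \ref{thm_feynasy}. The term $F_{\Gamma}(I,P(\varepsilon,L))$ lies in $\asy{\Hom(\ltens{\mathcal{E}}{L(\Gamma)},\smth{0,\infty})}$ with the local small-$L$ behaviour directly by that theorem. For each subgraph term, the inductive form $C_{\gamma}=\sum_i g_i\Psi_i$, with $\Psi_i$ local and $g_i$ purely singular of finite pole order, lets me pull the scalar functions $g_i(\varepsilon)$ out of $F_{(\Gamma,\gamma;C_*)}$ and rewrite what remains as an ordinary Feynman amplitude of the contracted graph $\Gamma/\gamma$ whose vertices in $\mathscr{V}_{\gamma}$ are now decorated by the local functionals $\Psi_i$. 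Theorem \ref{thm_feynasy} applies to that amplitude verbatim, since its proof only uses that the vertex decorations are local, and multiplying back by the finite-pole singular functions $g_i(\varepsilon)$ preserves membership in $\asy{\cdots}$; hence $\overline{R}_{\Gamma}\in\asy{\cdots}$ and $C_{\Gamma}=-\Sing[\overline{R}_{\Gamma}]$ is a finite sum $\sum_k T(g_k)\Psi_k(L)$ in which the $\varepsilon$-functions $T(g_k)$ are purely singular with finite-order poles and each $\Psi_k(L)$ still carries a small-$L$ expansion in local functionals. Once \eqref{item_BPHZlind} is known, the $L$-constancy of this sum forces it to equal its $L^0$-coefficient, a combination of local functionals, which gives \eqref{item_BPHZlocal}.

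The main obstacle is \eqref{item_BPHZlind}, the $L$-independence, and here I would differentiate \eqref{eqn_BPHZdef} in $L$. By the inductive hypothesis every $C_{\gamma}$ appearing inside $\overline{R}_{\Gamma}$ is already $L$-independent, so $\partial_L$ only hits the propagators $P(\varepsilon,L)$ on the edges of $\Gamma/\gamma$; since $\partial_L P(\varepsilon,L)=k(\cdot,\cdot,L)$ is the heat kernel at the fixed time $L$, and is therefore independent of $\varepsilon$ and smooth, $\partial_L\overline{R}_{\Gamma}$ becomes a sum, over edges $e$, of subtracted amplitudes in which the propagator on $e$ has been replaced by this regular kernel. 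The crux is to argue that each such term converges as $\varepsilon\to 0$: replacing one propagator by the $\varepsilon$-independent kernel $k(\cdot,\cdot,L)$ removes the short-distance singularity associated to that edge, effectively cutting it and reducing to a configuration with strictly fewer singular edges whose remaining subgraph subtractions renormalize it by the inductive case of \eqref{item_BPHZnonsing}. Granting this, $\Sing[\partial_L\overline{R}_{\Gamma}]=0$; commuting $\partial_L$ through the $\varepsilon$-operator $\Sing$, which is legitimate by naturality since $\partial_L$ acts only on the $L$-variable, then yields $\partial_L C_{\Gamma}=-\Sing[\partial_L\overline{R}_{\Gamma}]=0$, which is \eqref{item_BPHZlind}. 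I expect the careful bookkeeping showing that the subtractions left over after cutting $e$ match precisely those of a convergent renormalized amplitude to be the most delicate point, since it is exactly where the analytic content of Theorem \ref{thm_feynasy} must be combined with the combinatorics of Definition \ref{def_subfamp}.
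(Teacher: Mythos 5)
Your proposal is correct, and for parts \eqref{item_BPHZnonsing} and \eqref{item_BPHZlocal} it is essentially the paper's own argument: the same induction on $|E(\Gamma)|$, the same use of idempotency of $\Sing$ for connected graphs and of multiplicativity plus the factorization $\overline{R}_{\Gamma_1\sqcup\Gamma_2}+C_{\Gamma_1\sqcup\Gamma_2}=(\overline{R}_{\Gamma_1}+C_{\Gamma_1})\otimes(\overline{R}_{\Gamma_2}+C_{\Gamma_2})$ for disconnected ones, and the same transfer of locality from the small-$L$ expansion of Theorem \ref{thm_feynasy} once $L$-independence is known (the paper makes your ``equals its $L^0$-coefficient'' step precise by observing the $L\to 0$ limit takes place in a finite-dimensional, hence closed, subspace). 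Where you genuinely diverge is part \eqref{item_BPHZlind}. The paper fixes $L_0$, splits the propagator as $P(\varepsilon,L)=P(\varepsilon,L_0)+P(L_0,L)$, expands multilinearly over edges and resums over nested pairs of subgraphs to obtain the identity \eqref{eqn_lind}, in which every term except $\overline{R}_\Gamma(\varepsilon,L_0)$ is built from $\overline{R}_{\gamma'}(\varepsilon,L_0)+C_{\gamma'}(\varepsilon)$ and the $\varepsilon$-independent $P(L_0,L)$, hence is regular by the inductive case of \eqref{item_BPHZnonsing}; thus $\Sing[\overline{R}_\Gamma(-,-)]=\Sing[\overline{R}_\Gamma(-,L_0)]$. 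Your route is the infinitesimal version of this: differentiate \eqref{eqn_BPHZdef} in $L$, use $\partial_LP(\varepsilon,L)=k(\cdot,\cdot,L)$, and group by the edge $e$ that the derivative hits. The bookkeeping you deferred does close: for fixed $e$, the inner sum runs over exactly the subgraphs $\gamma$ with $e\notin E(\gamma)$, i.e.\ over all subsets of $E(\Gamma)\setminus\{e\}$, and the stable graph associated to the full subset $E(\Gamma)\setminus\{e\}$ is precisely the cut graph, its subtraction supplying the top counterterm; since $\Gamma$ is connected, every component of the cut graph acquires a leg, edgeless components contribute $\varepsilon$-independent vertex tensors, and the edge-bearing components fall under the disconnected case of \eqref{item_BPHZnonsing}, so each differentiated term is the pairing of a convergent, fully subtracted amplitude with the smooth $\varepsilon$-independent kernel $k(\cdot,\cdot,L)$. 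Commuting $\partial_L$ through $\Sing$ is legitimate by the stated naturality of $\Sing$ in $\mathcal{W}$ applied to the continuous linear map $\partial_L$ on $\smth{0,\infty}$, and $\partial_LC_\Gamma=0$ on the connected interval $(0,\infty)$ gives constancy. As for what each buys: the paper's finite splitting is purely combinatorial and never differentiates under $\Sing$, while your derivative argument needs a little more analytic care (Leibniz for the multilinear pairing, $\partial_L$-naturality, and the same mild extension of Theorem \ref{thm_feynasy} to amplitudes with mixed edge decorations that the paper itself uses tacitly for \eqref{eqn_lind}) but isolates the cleaner structural fact that the $L$-derivative of $\overline{R}_\Gamma$ is already renormalized.
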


\begin{rem} \label{rem_ctind}
Following our proof that the counterterms are independent of $L$, we will denote these counterterms simply by $C_{\Gamma}(\varepsilon)$. In the next section, we will use this independence to define an effective field theory.
\end{rem}

\clearpage

\begin{proof}
The proof is by induction on the number of edges of $\Gamma$, where we assume \eqref{item_BPHZnonsing}, \eqref{item_BPHZlind} and \eqref{item_BPHZlocal} hold for all graphs with fewer edges than $\Gamma$. If $\Gamma$ is a connected graph then \eqref{item_BPHZnonsing} follows from the defining equation \eqref{eqn_BPHZdef}. If $\Gamma$ is not connected then we may write $\Gamma=\Gamma_1\sqcup\Gamma_2$ for some stable graphs $\Gamma_1$, $\Gamma_2$. Then a simple and standard calculation, see for instance \cite[\S 5.3.3]{collins}, shows that
\[ \overline{R}_{\Gamma}(\varepsilon, L)+C_{\Gamma}(\varepsilon, L) = \left(\overline{R}_{\Gamma_1}(\varepsilon, L)+C_{\Gamma_1}(\varepsilon, L)\right) \otimes \left(\overline{R}_{\Gamma_2}(\varepsilon, L)+C_{\Gamma_2}(\varepsilon, L)\right). \]
Here we have used that $C_*(\varepsilon, L)$ is multiplicative and that every component of $\Gamma$ has at least one edge. By the inductive hypothesis, this expression converges as $\varepsilon\to 0$. This establishes \eqref{item_BPHZnonsing}.

To prove \eqref{item_BPHZlind}, it suffices to consider the case when $\Gamma$ is connected with a nonempty set of edges, for if $\Gamma$ has no edges then $C_{\Gamma}=0$. We fix an $L_0\in\mathbb{R}$ and calculate,
\begin{multline*}
\overline{R}_{\Gamma}(\varepsilon,L) = \sum_{\gamma\varsubsetneq\Gamma} F_{(\Gamma,\gamma;C_*(\varepsilon,L))}(I,P(\varepsilon,L)) = \sum_{\gamma\varsubsetneq\Gamma}F_{(\Gamma,\gamma;C_*(\varepsilon))}(I,P(\varepsilon,L_0)+P(L_0,L)) \\
= \sum_{\gamma\varsubsetneq\Gamma}\sum_{\gamma'\subset\Gamma/\gamma} \left[C_{\gamma}(\varepsilon)\bigotimes_{v\in V(\Gamma)-V(\gamma)} \left[\widetilde{I}_{g_v,|v|}\right] \right]\left(\bigotimes_{e\in E(\gamma')}\left[P(\varepsilon,L_0)\right]\bigotimes_{e\in E(\Gamma/\gamma)-E(\gamma')} \left[P(L_0, L)\right] \right),
\end{multline*}
where we evaluate the tensor between the left set of parentheses on the tensor between the right parentheses. Here we have already adopted the convention laid out in Remark \ref{rem_ctind} and made use of the inductive hypothesis. Note that \eqref{item_BPHZlocal} and the inductive hypothesis ensure, by Theorem \ref{thm_feynasy}, that $\overline{R}_{\Gamma}$ has the requisite asymptotic expansion.

We may rearrange the above sum by replacing subgraphs of $\Gamma/\gamma$ with subgraphs of $\Gamma$ containing $\gamma$.
\begin{multline*}
\overline{R}_{\Gamma}(\varepsilon,L) = \\
\sum_{\gamma'\subset\Gamma}\sum_{\begin{subarray}{c} \gamma\subset\gamma': \\ \gamma\neq\Gamma \end{subarray}}\left[C_{\gamma}(\varepsilon)\bigotimes_{v\in V(\Gamma)-V(\gamma)} \left[\widetilde{I}_{g_v,|v|}\right] \right] \left(\bigotimes_{e\in E(\gamma')-E(\gamma)}\left[P(\varepsilon,L_0)\right]\bigotimes_{e\in E(\Gamma)-E(\gamma')} \left[P(L_0, L)\right] \right) \\
\shoveleft{= \sum_{\gamma\varsubsetneq\Gamma} \left[C_{\gamma}(\varepsilon)\bigotimes_{v\in V(\Gamma)-V(\gamma)} \left[\widetilde{I}_{g_v,|v|}\right] \right]\left(\bigotimes_{e\in E(\Gamma)-E(\gamma)}\left[P(\varepsilon,L_0)\right]\right) +} \\
\sum_{\gamma'\varsubsetneq\Gamma}\sum_{\gamma\subset\gamma'} \left[C_{\gamma}(\varepsilon)\bigotimes_{v\in V(\Gamma)-V(\gamma)} \left[\widetilde{I}_{g_v,|v|}\right] \right] \left(\bigotimes_{e\in E(\gamma')-E(\gamma)}\left[P(\varepsilon,L_0)\right]\bigotimes_{e\in E(\Gamma)-E(\gamma')} \left[P(L_0, L)\right] \right) \\
\shoveleft{= \overline{R}_{\Gamma}(\varepsilon,L_0) +} \\
\sum_{\gamma'\varsubsetneq\Gamma}\sum_{\gamma\subset\gamma'} \left[C_{\gamma}(\varepsilon)\bigotimes_{v\in V(\Gamma)-V(\gamma)} \left[\widetilde{I}_{g_v,|v|}\right] \right] \left(\bigotimes_{e\in E(\gamma')-E(\gamma)}\left[P(\varepsilon,L_0)\right]\bigotimes_{e\in E(\Gamma)-E(\gamma')} \left[P(L_0, L)\right] \right).
\end{multline*}
Now we calculate the latter part of this expression as follows.
\begin{multline*}
\sum_{\gamma'\varsubsetneq\Gamma}\sum_{\gamma\subset\gamma'} \left[C_{\gamma}(\varepsilon)\bigotimes_{v\in V(\gamma')-V(\gamma)} \left[\widetilde{I}_{g_v,|v|}\right]\bigotimes_{v\in V(\Gamma)-V(\gamma')} \left[\widetilde{I}_{g_v,|v|}\right] \right] \\
\shoveright{\left(\bigotimes_{e\in E(\gamma')-E(\gamma)}\left[P(\varepsilon,L_0)\right]\bigotimes_{e\in E(\Gamma)-E(\gamma')} \left[P(L_0, L)\right] \right)} \\
\shoveleft{= F_{\Gamma}(I,P(L_0,L)) + \sum_{\emptyset\varsubsetneq\gamma'\varsubsetneq\Gamma} \left[C_{\gamma'}(\varepsilon)\bigotimes_{v\in V(\Gamma)-V(\gamma')} \left[\widetilde{I}_{g_v,|v|}\right] \right]\left(\bigotimes_{e\in E(\Gamma)-E(\gamma')} \left[P(L_0, L)\right] \right) +} \\
\sum_{\emptyset\varsubsetneq\gamma'\varsubsetneq\Gamma}\sum_{\gamma\varsubsetneq\gamma'} \left[F_{(\gamma',\gamma;C_*(\varepsilon))}(I,P(\varepsilon,L_0))\bigotimes_{v\in V(\Gamma)-V(\gamma')} \left[\widetilde{I}_{g_v,|v|}\right] \right]\left(\bigotimes_{e\in E(\Gamma)-E(\gamma')} \left[P(L_0, L)\right] \right) \\
\shoveleft{= F_{\Gamma}(I,P(L_0,L)) +} \\
\sum_{\emptyset\varsubsetneq\gamma'\varsubsetneq\Gamma} \left[\left(\overline{R}_{\gamma'}(\varepsilon,L_0)+C_{\gamma'}(\varepsilon)\right)\bigotimes_{v\in V(\Gamma)-V(\gamma')} \left[\widetilde{I}_{g_v,|v|}\right] \right]\left(\bigotimes_{e\in E(\Gamma)-E(\gamma')} \left[P(L_0, L)\right] \right).
\end{multline*}
Combining these two calculations we arrive at,
\begin{equation} \label{eqn_lind}
\overline{R}_{\Gamma}(\varepsilon,L) = \overline{R}_{\Gamma}(\varepsilon,L_0) + F_{\Gamma}(I,P(L_0,L)) + \sum_{\emptyset\varsubsetneq\gamma'\varsubsetneq\Gamma} F_{(\Gamma,\gamma';\overline{R}_*(\varepsilon,L_0)+C_*(\varepsilon))}(I,P(L_0,L)).
\end{equation}
By \eqref{item_BPHZnonsing}, the expression $\overline{R}_{\gamma'}(\varepsilon,L_0)+C_{\gamma'}(\varepsilon)$ converges as $\varepsilon\to 0$, so all the terms of \eqref{eqn_lind} are nonsingular except possibly the first. Hence,
\[ C_{\Gamma}(-,-) = -\Sing\left[\overline{R}_{\Gamma}(-,-)\right] = -\Sing\left[\overline{R}_{\Gamma}(-,L_0)\right]. \]
Since $\overline{R}_{\Gamma}(-,L_0)$ does not depend on the parameter $L$ as we have fixed $L=L_0$, it follows from the naturality of the operator $\Sing$ that its singular part is also independent of $L$. This shows that the counterterms $C_{\Gamma}(\varepsilon,L)$ do not depend on $L$, which finishes the proof of \eqref{item_BPHZlind}.

Now \eqref{item_BPHZlocal} will follow as a simple consequence of \eqref{item_BPHZlind}. By part \eqref{item_feynasyexp} of Theorem \ref{thm_feynasy} we may write the counterterm $C_{\Gamma}$ as a finite sum,
\[ C_{\Gamma} = \sum_{i=1}^N g_i\Psi_i; \qquad g_i\in\singfun, \Psi_i\in\Hom(\ltens{\mathcal{E}}{L(\Gamma)},\smth{0,\infty}); \]
where the $g_i$ have finite order poles at zero. Since the counterterm $C_{\Gamma}$ is independent of $L$, we may assume that each $\Psi_i$ does not depend on $L$ either. From this and part \eqref{item_feynasyloc} of Theorem \ref{thm_feynasy} we may write,
\[ \Psi_i = \lim_{L\to 0}\left[\sum_{j=1}^K f_{ij}(L)\psi_{ij}\right], \]
where each $\psi_{ij}\in\Hom_{\mathrm{loc}}(\ltens{\mathcal{E}}{L(\Gamma)},\gf)$ and the $f_{ij}$ are smooth functions of $L$. Since this limit occurs in a finite-dimensional and hence closed subspace of a Hausdorff topological vector space, we may conclude that each $\Psi_i$ is itself a local functional. This completes the proof of \eqref{item_BPHZlocal} and this lemma.
\end{proof}

We recall that the counterterms $C_{\Gamma}$ are used to define the renormalized amplitude of a stable graph, whose definition is as follows.

\begin{defi}
Let $I\in\psloc{\mathcal{E}}[[\hbar]]$ be a local interaction. The renormalized Feynman amplitude of a connected stable graph $\Gamma$ at length scale $L$ is defined by
\[ R_{\Gamma}(L):=\lim_{\varepsilon\to 0}\left[\overline{R}_{\Gamma}(\varepsilon,L)+C_{\Gamma}(\varepsilon)\right]\in\ltens{\mathcal{E}^{\dag}}{L(\Gamma)}, \qquad 0<L\leq\infty. \]
\end{defi}
We will frequently abuse notation and denote the corresponding monomial in $\psloc{\mathcal{E}}$ by the same symbol $R_{\Gamma}(L)$, just as we did with the usual Feynman amplitude in Definition \ref{def_feynamp}. We will adopt the same convention for $C_{\Gamma}(\varepsilon)$ and $\overline{R}_{\Gamma}(\varepsilon,L)$ as well.

We state one final simple lemma before we move on to the proof of our main theorem.

\begin{lemma} \label{lem_cteqvar}
The functionals $C_*(\varepsilon)$ and $\overline{R}_*(\varepsilon,L)$ on stable graphs are equivariant.
\end{lemma}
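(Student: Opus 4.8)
The plan is to prove both statements simultaneously by induction on the number of edges $|E(\Gamma)|$: I will establish equivariance of $\overline{R}_{*}$ at a given edge-count from the equivariance of $C_{*}$ on graphs with strictly fewer edges, and then deduce equivariance of $C_{*}$ at that same edge-count from the equivariance of $\overline{R}_{*}$ just obtained. Since the pullback $\phi^{\#}$ acts only on the tensor factors indexed by $L(\Gamma)$ and leaves the parameters $\varepsilon$ and $L$ untouched, it suffices to verify each identity for fixed values of those parameters. The base case is a graph with no edges, where the earlier lemma gives $C_{\Gamma}=0$ and the empty sum leaves $\overline{R}_{\Gamma}=F_{\Gamma}(I,P)$; equivariance is then immediate from the observation, recorded in the remark following the definition of equivariant functionals, that the ordinary Feynman amplitude of Definition \ref{def_feynamp} is equivariant.

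For the inductive step, I fix an isomorphism $\phi:\Gamma\to\Gamma'$ and decompose $\overline{R}_{\Gamma}$ into its two constituents. The first, $F_{\Gamma}(I,P(\varepsilon,L))$, satisfies $F_{\Gamma'}(I,P)=\phi^{\#}F_{\Gamma}(I,P)$ because the ordinary amplitude is equivariant. For the sum over proper subgraphs, I would first note that, since a subgraph is by definition a subset of $E(\Gamma)$ and $\phi$ restricts to a bijection $E(\Gamma)\to E(\Gamma')$, the assignment $\gamma\mapsto\phi(\gamma)$ is a bijection between the nonempty proper subgraphs of $\Gamma$ and those of $\Gamma'$. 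Moreover $\phi$ restricts to an isomorphism of associated stable graphs $\gamma\to\phi(\gamma)$ and descends to an isomorphism $\Gamma/\gamma\to\Gamma'/\phi(\gamma)$ compatible, through \eqref{eqn_legcontract} and \eqref{eqn_vertexcontract}, with the splitting of vertices into $\mathscr{V}_{\gamma}$ and the surviving vertices. As each such $\gamma$ has strictly fewer edges than $\Gamma$, the inductive hypothesis supplies $C_{\phi(\gamma)}=(\phi|_{\gamma})^{\#}C_{\gamma}$.

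It then remains to check that the subgraph-replacement amplitude of Definition \ref{def_subfamp} is natural, i.e. that $F_{(\Gamma',\phi(\gamma);\nu)}(I,P)=\phi^{\#}F_{(\Gamma,\gamma;\nu)}(I,P)$ whenever $\nu$ is equivariant. This reduces to the functoriality of $\ltens{\mathcal{E}^{\dag}}{-}$ under bijections, together with the facts that each decoration $\widetilde{I}_{g_v,|v|}$ depends only on the $\phi$-invariant data $(g_v,|v|)$ of a vertex and that the propagator decorating every edge is the same symmetric tensor; so every piece assembled in Definition \ref{def_subfamp} is carried to its primed counterpart by $\phi^{\#}$. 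Summing over subgraphs and restoring the bare amplitude yields $\overline{R}_{\Gamma'}=\phi^{\#}\overline{R}_{\Gamma}$. Applying $-\Sing$ and using that $\Sing$ is natural in $\mathcal{W}$, hence commutes with the linear map $\phi^{\#}$, gives $C_{\Gamma'}=-\Sing[\overline{R}_{\Gamma'}]=-\phi^{\#}\Sing[\overline{R}_{\Gamma}]=\phi^{\#}C_{\Gamma}$, which closes the induction.

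I expect the only genuine work to be the naturality of $F_{(\Gamma,\gamma;\nu)}$: one must track how $\phi$ acts on the index sets $H(\Gamma/\gamma)$, $\mathscr{V}_{\gamma}$ and $L(\Gamma/\gamma)=L(\Gamma)$, and confirm that the induced isomorphism $\Gamma/\gamma\to\Gamma'/\phi(\gamma)$ intertwines the labelling bijections implicit in the functor $\ltens{\mathcal{E}^{\dag}}{-}$. This is pure bookkeeping rather than analysis, but it is the step where care is required. The extension from connected to disconnected $\gamma$, where $\phi$ may permute isomorphic components, is then absorbed automatically by the multiplicativity of $C_{*}$ and the built-in $S_n$-invariance of $\ltens{\mathcal{E}^{\dag}}{-}$, and the same product structure yields equivariance of $\overline{R}_{*}$ on disconnected stable graphs.
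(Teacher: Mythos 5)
Your proof is correct and takes exactly the approach the paper intends: the paper's entire proof is the one-line remark that the result follows by a simple induction on the number of edges, and your simultaneous induction --- equivariance of $\overline{R}_{*}$ from the inductive hypothesis on $C_{*}$ applied to the proper subgraphs, then equivariance of $C_{*}$ via the defining equation \eqref{eqn_BPHZdef} and the naturality of $\Sing$ --- is the natural fleshing-out of that remark. The bookkeeping you isolate (naturality of the subgraph-replacement amplitude of Definition \ref{def_subfamp} under the induced isomorphisms $\gamma\to\phi(\gamma)$ and $\Gamma/\gamma\to\Gamma'/\phi(\gamma)$, plus multiplicativity to handle permuted components of disconnected subgraphs) is indeed the only real content, and your treatment of it is sound.
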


\begin{proof}
The result follows from a simple induction on the number of edges in a graph.
\end{proof}

\section{Main theorem on the construction of an effective field theory} \label{sec_main}

In this section we define a family of effective interactions through the graphical combinatorics of the BPHZ algorithm. This family of interactions will form an effective field theory satisfying the axioms of Definition \ref{def_effthy}. We prove that this effective field theory defined through the BPHZ algorithm is the same as the effective field theory \eqref{eqn_effthy} that was defined by Costello and described in Section \ref{sec_effthyconst}. We recall that the latter required no graphical combinatorics for its construction, relying only on the simple construction \eqref{eqn_counterterms} of counterterms described in Theorem \ref{thm_counterterms}.

We begin by defining this family of effective interactions using the renormalized amplitude of a stable graph.

\begin{defi}
Let $I\in\psloc{\mathcal{E}}[[\hbar]]$ be a local interaction and define a family of effective interactions by
\begin{equation} \label{eqn_BPHZeffthy}
I[L] := \sum_{i=0}^{\infty} \hbar^iI_{i0} + \sum_{\Gamma}\frac{\hbar^{g(\Gamma)}}{|\Aut(\Gamma)|} R_{\Gamma}(L), \qquad 0<L\leq\infty;
\end{equation}
where we sum over all connected stable graphs $\Gamma$.
\end{defi}

With some effort, we may show directly that the effective interactions defined by Equation \eqref{eqn_BPHZeffthy} form an effective field theory. However, our main theorem which we will now state and prove will show that this family of effective interactions is the same as the one defined by Equation \eqref{eqn_effthy}, and this family of interactions forms an effective field theory by Theorem \ref{thm_effthy}. Hence, one immediate consequence of our main theorem will be that the effective interactions defined by \eqref{eqn_BPHZeffthy} above form an effective field theory in the sense of Definition \ref{def_effthy}.

\begin{theorem} \label{thm_main}
Let $I\in\psloc{\mathcal{E}}[[\hbar]]$ be a local interaction.
\begin{enumerate}
\item \label{item_maincterms}
Consider the counterterms $I_{ij}^{CT}$ defined in Theorem \ref{thm_counterterms} by Equation \eqref{eqn_counterterms} and the counterterms $C_{\Gamma}$ defined through the BPHZ algorithm by Equation \eqref{eqn_BPHZdef} of Definition \ref{def_BPHZ}. Then,
\begin{equation} \label{eqn_ctermeqv}
I_{ij}^{CT} = -\sum_{\Gamma: \begin{subarray}{c} g(\Gamma)=i \\ |L(\Gamma)|=j \end{subarray}} \frac{1}{|\Aut(\Gamma)|} C_{\Gamma}, \qquad i,j\geq 0;
\end{equation}
where we sum over all connected stable graphs of genus $i$ with $j$ legs.
\item \label{item_maineffthy}
Consider the family of effective interactions \eqref{eqn_effthy} defined by Costello through his construction of the counterterms $I_{ij}^{CT}$ and the family of effective interactions \eqref{eqn_BPHZeffthy} defined through the BPHZ algorithm. Then these two families of effective interactions are identical, hence these two constructions define the same effective field theory.
\end{enumerate}
\end{theorem}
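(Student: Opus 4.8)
The plan is to prove the two parts of Theorem \ref{thm_main} in order, deducing the coincidence of effective field theories in part \eqref{item_maineffthy} from the coincidence of counterterms in part \eqref{item_maincterms}.

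For part \eqref{item_maincterms} I would argue by induction on the pair $(i,j)$ with respect to the well-ordering \eqref{eqn_indexorder}, assuming \eqref{eqn_ctermeqv} for all $(p,q)<(i,j)$. Set $J:=I-\sum_{(p,q)<(i,j)}\hbar^p I_{pq}^{CT}(\varepsilon)$, so that \eqref{eqn_counterterms} reads $I_{ij}^{CT}=\Sing[W_{ij}(J,P(\varepsilon,L))]$. By \eqref{eqn_feycomp}, $W_{ij}(J,P)$ equals $J_{ij}$ plus a sum over connected stable graphs $\Gamma'$ of genus $i$ with $j$ legs and at least one edge. Since the subtracted counterterms involve only indices strictly below $(i,j)$, one has $J_{ij}=I_{ij}$, which is $\varepsilon$-independent and hence annihilated by $\Sing$; so only the graph sum matters. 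The key step is to expand each amplitude $F_{\Gamma'}(J,P)$ by multilinearity over vertex decorations: each vertex $v$ of $\Gamma'$ is decorated either by $I_{g_v,|v|}$ or, when $(g_v,|v|)<(i,j)$, by $-I_{g_v,|v|}^{CT}$. Applying the inductive hypothesis replaces the latter by $\sum_{\gamma_v}|\Aut(\gamma_v)|^{-1}C_{\gamma_v}$, a sum over connected graphs $\gamma_v$ of genus $g_v$ with $|v|$ legs.

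At this point the operations of Section \ref{sec_BPHZ} take over. Decorating a set $S$ of vertices of $\Gamma'$ by the counterterms $C_{\gamma_v}$ is exactly the insertion of Definition \ref{def_graphinsert}: the result is the amplitude $F_{(\Gamma,\gamma;C_*)}(I,P)$ of Definition \ref{def_subfamp}, where $\Gamma:=\Gamma'\circ_\tau\gamma$ is obtained by inserting $\gamma:=\sqcup_{v\in S}\gamma_v$ and $\Gamma'=\Gamma/\gamma$ is its contraction. Because $\Gamma'$ has at least one edge, $\gamma$ is a \emph{proper} subgraph of $\Gamma$, and since each counterterm-decorated component has a nonempty edge set (recall $C_\gamma=0$ when $\gamma$ has no edges), $\gamma$ is either empty, giving $F_\Gamma(I,P)$, or a nonempty proper subgraph. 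I would then re-sum over isomorphism classes: the equivalence of categories of Theorem \ref{thm_eqvcat} matches pairs $(\Gamma,\gamma)$ with decorated outer graphs $(\Gamma',\tau,\gamma)$, and a routine orbit-counting (groupoid-cardinality) argument shows that the weights $|\Aut(\Gamma')|^{-1}\prod_{v\in S}|\Aut(\gamma_v)|^{-1}$ accumulated on the left reassemble into $|\Aut(\Gamma)|^{-1}$ summed over the subgraphs of a fixed $\Gamma$. This yields $W_{ij}(J,P)=I_{ij}+\sum_{\Gamma}|\Aut(\Gamma)|^{-1}\overline{R}_\Gamma$ over connected $\Gamma$ of genus $i$ with $j$ legs; applying $\Sing$ and invoking \eqref{eqn_BPHZdef}, so that $\Sing[\overline{R}_\Gamma]=-C_\Gamma$, gives $I_{ij}^{CT}=-\sum_\Gamma|\Aut(\Gamma)|^{-1}C_\Gamma$, as required.

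For part \eqref{item_maineffthy} I would run the identical expansion on the full diagram expansion $W(I-\sum_{i,j}\hbar^iI_{ij}^{CT}(\varepsilon),P(\varepsilon,L))$ from \eqref{eqn_effthy}, now with no restriction on the counterterm indices. The only new feature is the \emph{improper} subgraph $\gamma=\Gamma$, which arises from decorating a single-vertex (edgeless) outer graph by a counterterm --- for $j\geq1$ as a single-vertex term in the graph sum, and for $j=0$ from the explicit constant term $-\sum_i\hbar^i I_{i0}^{CT}$; using part \eqref{item_maincterms} these are handled directly and contribute the extra summand $+C_\Gamma$. Together with the genus identity $g(\Gamma\circ_\tau\gamma)=g(\Gamma)$ of Definition \ref{def_graphinsert}, the same re-summation produces $\sum_i\hbar^iI_{i0}+\sum_\Gamma \hbar^{g(\Gamma)}|\Aut(\Gamma)|^{-1}\big[\overline{R}_\Gamma(\varepsilon,L)+C_\Gamma(\varepsilon)\big]$. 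Passing to $\varepsilon\to0$, which exists termwise by part \eqref{item_BPHZnonsing} and globally by Theorem \ref{thm_counterterms}, the bracket converges to $R_\Gamma(L)$, so the limit is precisely the BPHZ family \eqref{eqn_BPHZeffthy}; as the left-hand side is Costello's family \eqref{eqn_effthy}, the two effective field theories coincide. The main obstacle I anticipate is the bookkeeping of automorphism factors in the re-summation, namely verifying orbit by orbit that the symmetry factors generated by the multilinear vertex expansion and the inductive substitution agree with the single factor $|\Aut(\Gamma)|^{-1}$ attached to each graph; this is exactly what Theorem \ref{thm_eqvcat} together with the functoriality of $\ltens{\mathcal{E}^{\dag}}{\cdot}$ and of insertion/contraction is designed to supply, while the analytic inputs are furnished by Theorem \ref{thm_feynasy} and the naturality of $\Sing$.
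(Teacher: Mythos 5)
Your proposal is correct and takes essentially the same route as the paper: the same induction along the well-ordering \eqref{eqn_indexorder} establishing the intermediate identity $W_{ij}\big(I-\sum_{(p,q)<(i,j)}\hbar^pI_{pq}^{CT}(\varepsilon),P(\varepsilon,L)\big)=I_{ij}+\sum_{\Gamma}|\Aut(\Gamma)|^{-1}\overline{R}_{\Gamma}(\varepsilon,L)$ via the multilinear vertex expansion, the insertion/contraction equivalence of Theorem \ref{thm_eqvcat}, and the orbit-counting of automorphism factors that you correctly flag as the main bookkeeping (the paper carries it out by computing the fibers $|E_{\lambda}|$ and $|B_{\kappa}|$ through free and transitive group actions), after which applying $\Sing$ and \eqref{eqn_BPHZdef} gives \eqref{eqn_ctermeqv}. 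For part \eqref{item_maineffthy} you re-run the expansion with all counterterms included and isolate the improper-subgraph contribution $+C_{\Gamma}$ directly, whereas the paper extracts the same term by quoting Lemma \ref{lem_basicid}; this is the identical computation in different packaging, and your $\varepsilon\to 0$ limit argument matches the paper's.
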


\begin{rem}
We mention that the proof of Equation \eqref{eqn_ctermeqv} involves the same graphical combinatorics that appear in the well-known renormalization group identity,
\[ W(W(I,P_1),P_2) = W(I,P_1+P_2). \]
This particular identity is usually proved by making use of formula \eqref{eqn_feynographs}, see for instance Lemma 3.4.1 of \cite[\S 2.3]{coseffthy}.
\end{rem}

\begin{proof}
We begin with the proof of \eqref{item_maincterms}, which will be by induction along the well-ordering of the index set that is defined by \eqref{eqn_indexorder}. In fact, we will use induction to prove the identity,
\begin{equation} \label{eqn_mainid}
W_{ij}\left(I-\sum_{(p,q)<(i,j)}\hbar^p I_{pq}^{CT}(\varepsilon),P(\varepsilon,L)\right) = I_{ij} + \sum_{\Gamma: \begin{subarray}{c} g(\Gamma)=i \\ \ |L(\Gamma)|=j \\ |E(\Gamma)|>0 \end{subarray}} \frac{1}{|\Aut(\Gamma)|}\overline{R}_{\Gamma}(\varepsilon,L), \qquad i,j\geq 0;
\end{equation}
where we sum over all connected stable graphs of genus $i$ with $j$ legs that have at least one edge. Equation \eqref{eqn_ctermeqv} then follows as an immediate consequence of this identity by applying the operator $\Sing$ to both sides, which yields Costello's counterterms on the left and the counterterms coming from BPHZ on the right.

The calculation establishing \eqref{eqn_mainid} will be quite involved, so we start by introducing some notation. Write down a list,
\[ \Gamma_{\alpha}, \quad \alpha\in\mathcal{I}_{pq}; \]
of representatives for isomorphism classes of connected stable graphs of genus $p$ with $q$ legs and at least one edge. Similarly, let
\[ \left(\Gamma_{\alpha(\lambda)},\tau_{\lambda},\gamma_{\lambda}\right), \quad \lambda\in\mathcal{J}_{pq} \quad\qquad\text{and}\quad\qquad \left(\Gamma_{\alpha(\kappa)},\gamma_{\kappa}\right), \quad \kappa\in\mathcal{K}_{pq} \]
be lists of representatives for the isomorphism classes of those objects in $\ctrips$ and $\cpairs$ respectively for which $\Gamma$ has genus $p$ with $q$ legs. Note that Theorem \ref{thm_eqvcat} provides a bijection between the sets $\mathcal{J}_{pq}$ and $\mathcal{K}_{pq}$.

We begin by calculating the left-hand side of \eqref{eqn_mainid}. By Equation \eqref{eqn_feycomp},
\begin{multline*}
W_{ij}\left(I-\sum_{(p,q)<(i,j)}\hbar^p I_{pq}^{CT}(\varepsilon),P(\varepsilon,L)\right) = \\
I_{ij} + \sum_{\alpha\in\mathcal{I}_{ij}} \frac{1}{|\Aut(\Gamma_{\alpha})|}F_{\Gamma_{\alpha}}\left(I-\sum_{(p,q)<(i,j)}\hbar^p I_{pq}^{CT}(\varepsilon),P(\varepsilon,L)\right).
\end{multline*}
We focus on calculating the Feynman amplitude in this expression. Note that if $\Gamma$ is a stable graph of genus $i$ with $j$ legs and at least one edge, then an elementary argument shows that for every vertex $v$ of $\Gamma$, $(g_v,|v|)<(i,j)$. This means that,
\begin{multline*}
F_{\Gamma_{\alpha}}\left(I-\sum_{(p,q)<(i,j)}\hbar^p I_{pq}^{CT}(\varepsilon),P(\varepsilon,L)\right) = \\
\left[\bigotimes_{v\in V(\Gamma_{\alpha})}\left[\widetilde{I}_{g_v,|v|}-\widetilde{I}^{CT}_{g_v,|v|}(\varepsilon)\right]\right]\left(\bigotimes_{e\in E(\Gamma_{\alpha})}\left[P(\varepsilon,L)\right]\right) \\
= \sum_{\mathscr{V}\subset V(\Gamma_{\alpha})} \left[\bigotimes_{v\in V(\Gamma_{\alpha})-\mathscr{V}}\left[\widetilde{I}_{g_v,|v|}\right]\bigotimes_{v\in\mathscr{V}}\left[-\widetilde{I}^{CT}_{g_v,|v|}(\varepsilon)\right]\right] \left(\bigotimes_{e\in E(\Gamma_{\alpha})}\left[P(\varepsilon,L)\right]\right) \\
= \sum_{\mathscr{V}\subset V(\Gamma_{\alpha})} \left[\bigotimes_{v\in V(\Gamma_{\alpha})-\mathscr{V}}\left[\widetilde{I}_{g_v,|v|}\right]\bigotimes_{v\in\mathscr{V}}\left[\sum_{\beta\in\mathcal{I}_{g_v,|v|}}\frac{1}{|\Aut(\Gamma_{\beta})|}\widetilde{C}_{\Gamma_{\beta}}(\varepsilon)\right]\right] \left(\bigotimes_{e\in E(\Gamma_{\alpha})}\left[P(\varepsilon,L)\right]\right),
\end{multline*}
where we have taken the sum over all subsets $\mathscr{V}$ of $V(\Gamma_{\alpha})$ and used the inductive hypothesis.

If we examine the term $\widetilde{C}_{\Gamma_{\beta}}(\varepsilon)\in\ltens{\mathcal{E}^{\dag}}{v}$ in the above expression we find,
\[ \widetilde{C}_{\Gamma_{\beta}}(\varepsilon) = \sum_{\sigma\in\Bij\left(L(\Gamma_{\beta}),v\right)} \sigma^{\#} \left(C_{\Gamma_{\beta}}(\varepsilon)\right), \]
where we sum over all bijections between the legs of the graph and the vertex.

Before computing the Feynman amplitude further, we first clarify some new notation. We denote by $\prod_{v\in\mathscr{V}}\mathcal{I}_{g_v,|v|}$ the set of all those functions $\beta$ on $\mathscr{V}$ that assign an index $\beta_v\in\mathcal{I}_{g_v,|v|}$ to a vertex $v\in\mathscr{V}$. Likewise, we denote by $\prod_{v\in\mathscr{V}}\Bij(L(\Gamma_{\beta_v}),v)$ the set of all those functions $\sigma$ that assign to every vertex $v\in\mathscr{V}$, a bijection $\sigma_v$ from $L(\Gamma_{\beta_v})$ to $v$.

Now we continue our computation by distributing the sums,
\begin{multline*}
F_{\Gamma_{\alpha}}\left(I-\sum_{(p,q)<(i,j)}\hbar^p I_{pq}^{CT}(\varepsilon),P(\varepsilon,L)\right) = \\
\sum_{\mathscr{V}\subset V(\Gamma_{\alpha})}\sum_{\beta\in\underset{v\in\mathscr{V}}{\prod}\mathcal{I}_{g_v,|v|}}\frac{1}{\underset{v\in\mathscr{V}}{\prod}|\Aut(\Gamma_{\beta_v})|} \left[\bigotimes_{v\in V(\Gamma_{\alpha})-\mathscr{V}}\left[\widetilde{I}_{g_v,|v|}\right]\bigotimes_{v\in\mathscr{V}}\left[\widetilde{C}_{\Gamma_{\beta_v}}(\varepsilon)\right]\right] \left(\bigotimes_{e\in E(\Gamma_{\alpha})}\left[P(\varepsilon,L)\right]\right) \\
\shoveleft{= \sum_{\mathscr{V}\subset V(\Gamma_{\alpha})} \ \sum_{\beta\in\underset{v\in\mathscr{V}}{\prod}\mathcal{I}_{g_v,|v|}} \ \sum_{\sigma\in\underset{v\in\mathscr{V}}{\prod}\Bij\left(L(\Gamma_{\beta_v}),v\right)}} \\
\frac{1}{\underset{v\in\mathscr{V}}{\prod}|\Aut(\Gamma_{\beta_v})|} \left[\bigotimes_{v\in V(\Gamma_{\alpha})-\mathscr{V}}\left[\widetilde{I}_{g_v,|v|}\right]\bigotimes_{v\in\mathscr{V}}\left[\sigma_v^{\#}\left(C_{\Gamma_{\beta_v}}(\varepsilon)\right)\right]\right] \left(\bigotimes_{e\in E(\Gamma_{\alpha})}\left[P(\varepsilon,L)\right]\right).
\end{multline*}

Given a connected stable graph $\Gamma$ with at least one edge, a subset $\mathscr{V}\subset V(\Gamma)$ and $\beta$ and $\sigma$ as above we may define a map
\[ \tau_{(\Gamma,\mathscr{V},\beta,\sigma)} := \left(\bigsqcup_{v\in\mathscr{V}} \sigma_v: L\bigg(\bigsqcup_{v\in\mathscr{V}}\Gamma_{\beta_v}\bigg)\to H(\Gamma)\right). \]
This map satisfies the requirements of Definition \ref{def_graphinsert} so that $(\Gamma,\tau_{(\Gamma,\mathscr{V},\beta,\sigma)},\sqcup_{v\in\mathscr{V}}\Gamma_{\beta_v})$ defines an object of $\ctrips$.

The next step is to split up our sum using the index set $\mathcal{J}_{ij}$. For this we introduce the notation $\prod_{c}\big|\Aut\left(\gamma^{(c)}\right)\big|$ for the product of the cardinalities of the automorphism groups of the connected components of a stable graph $\gamma$. Combining our calculations so far yields,
\begin{multline*}
W_{ij}\left(I-\sum_{(p,q)<(i,j)}\hbar^p I_{pq}^{CT}(\varepsilon),P(\varepsilon,L)\right) = I_{ij} + \sum_{\alpha\in\mathcal{I}_{ij}} \ \sum_{\mathscr{V}\subset V(\Gamma_{\alpha})} \ \sum_{\beta\in\underset{v\in\mathscr{V}}{\prod}\mathcal{I}_{g_v,|v|}} \ \sum_{\sigma\in\underset{v\in\mathscr{V}}{\prod}\Bij\left(L(\Gamma_{\beta_v}),v\right)} \\
\frac{1}{|\Aut(\Gamma_{\alpha})|\underset{v\in\mathscr{V}}{\prod}|\Aut(\Gamma_{\beta_v})|} \left[\bigotimes_{v\in V(\Gamma_{\alpha})-\mathscr{V}}\left[\widetilde{I}_{g_v,|v|}\right]\bigotimes\tau_{(\Gamma_{\alpha},\mathscr{V},\beta,\sigma)}^{\#}\left(C_{\underset{v\in\mathscr{V}}{\sqcup}\Gamma_{\beta_v}}(\varepsilon)\right)\right] \left(\bigotimes_{e\in E(\Gamma_{\alpha})}\left[P(\varepsilon,L)\right]\right).
\end{multline*}
Now we notice that the terms in this sum depend only on the isomorphism class of
\[ \left(\Gamma_{\alpha},\tau_{(\Gamma_{\alpha},\mathscr{V},\beta,\sigma)},\underset{v\in\mathscr{V}}{\sqcup}\Gamma_{\beta_v}\right) \]
in $\ctrips$, which follows from Lemma \ref{lem_cteqvar}. Hence,
\begin{multline} \label{eqn_mainlhscount}
W_{ij}\left(I-\sum_{(p,q)<(i,j)}\hbar^p I_{pq}^{CT}(\varepsilon),P(\varepsilon,L)\right) = I_{ij} + \\
\sum_{\lambda\in\mathcal{J}_{ij}} \frac{|E_{\lambda}|}{\Big|\Aut\big(\Gamma_{\alpha(\lambda)}\big)\Big|\prod_c \Big|\Aut\big(\gamma_{\lambda}^{(c)}\big)\Big|} \left[\bigotimes_{v\in V(\Gamma_{\alpha(\lambda)})-\mathscr{V_{\gamma_{\lambda}}}}\left[\widetilde{I}_{g_v,|v|}\right]\bigotimes\tau_{\lambda}^{\#}\left(C_{\gamma_{\lambda}}(\varepsilon)\right)\right] \left(\bigotimes_{e\in E(\Gamma_{\alpha(\lambda)})}\left[P(\varepsilon,L)\right]\right);
\end{multline}
where $E_{\lambda}$ is the set consisting of all 3-tuples $(\mathscr{V},\beta,\sigma)$ where $\mathscr{V}$ is a subset of $V(\Gamma_{\alpha(\lambda)})$ and $\beta$ and $\sigma$ are as above and for which,
\[ \Big(\Gamma_{\alpha(\lambda)},\tau_{(\Gamma_{\alpha(\lambda)},\mathscr{V},\beta,\sigma)},\underset{v\in\mathscr{V}}{\sqcup}\Gamma_{\beta_v}\Big)\cong\Big(\Gamma_{\alpha(\lambda)},\tau_{\lambda},\gamma_{\lambda}\Big). \]

It remains to count the elements of $E_{\lambda}$. We may assume that we can write $\gamma_{\lambda}$ as a union,
\[ \gamma_{\lambda} = \Gamma_{\alpha_1}\sqcup\Gamma_{\alpha_2}\sqcup\ldots\sqcup\Gamma_{\alpha_k} \]
for some choice of indices $\alpha_1,\ldots,\alpha_k$. Let $G_{\lambda}$ denote the set of all those injective functions,
\[ \tau:L(\gamma_{\lambda})\to H(\Gamma_{\alpha(\lambda)}) \]
satisfying conditions \eqref{item_legstovert} and \eqref{item_genusglue} of Definition \ref{def_graphinsert} and such that,
\[ \big(\Gamma_{\alpha(\lambda)},\tau,\gamma_{\lambda}\big)\cong\big(\Gamma_{\alpha(\lambda)},\tau_{\lambda},\gamma_{\lambda}\big). \]
If $\tau$ is such a function then set $v^{\tau}_i:=\tau(L(\Gamma_{\alpha_i}))\in V(\Gamma_{\alpha(\lambda)})$. We may define a projection map,
\[ \pi:G_{\lambda}\to E_{\lambda}, \qquad \tau\mapsto (\mathscr{V}^{\tau},\beta^{\tau},\sigma^{\tau}); \]
in the following obvious fashion:
\begin{displaymath}
\begin{split}
\mathscr{V}^{\tau} &:= \{v_i^{\tau}:i=1,\ldots,k\}, \\
\beta^{\tau}_{v_i^{\tau}} & := \alpha_i, \\
\sigma^{\tau}_{v_i^{\tau}} & :=\left(\tau_{|L(\Gamma_{\alpha_i})}:L(\Gamma_{\alpha_i})\to v_i^{\tau}\right).
\end{split}
\end{displaymath}

Given a point $(\mathscr{V},\beta,\sigma)$ in $E_{\lambda}$, we may describe the (always nonempty) fiber $\pi^{-1}(\mathscr{V},\beta,\sigma)$ as follows. Consider the subgroup of $S_k$ defined by
\[ \Lambda_k:=\left\{\varsigma\in S_k: \alpha_{\varsigma(i)}=\alpha_i\text{ for all }i=1,\ldots, k \right\}. \]
This group acts on the fiber $\pi^{-1}(\mathscr{V},\beta,\sigma)$ in the obvious way by simply permuting the components of $\gamma_{\lambda}$. Checking carefully, we see that this action is free and transitive. Hence $G_{\lambda}$ is a (discrete) principal $\Lambda_k$-bundle over $E_{\lambda}$. It follows that,
\[ |G_{\lambda}|=|E_{\lambda}||\Lambda_k|. \]

Consider the group $\Aut(\Gamma_{\alpha(\lambda)})\times\Aut(\gamma_{\lambda})$, which acts transitively on $G_{\lambda}$ in the obvious way. The isotropy subgroup of $\tau_{\lambda}$ is $\Aut(\Gamma_{\alpha(\lambda)},\tau_{\lambda},\gamma_{\lambda})$. Hence,
\[ |G_{\lambda}| = \frac{|\Aut(\Gamma_{\alpha(\lambda)})||\Aut(\gamma_{\lambda})|}{|\Aut(\Gamma_{\alpha(\lambda)},\tau_{\lambda},\gamma_{\lambda})|}. \]
Since,
\[ \Aut(\gamma_{\lambda})\cong\left(\Aut(\Gamma_{\alpha_1})\times\cdots\times\Aut(\Gamma_{\alpha_k})\right)\rtimes\Lambda_k, \]
we compute,
\begin{equation} \label{eqn_count}
|E_{\lambda}| = \frac{|G_{\lambda}|}{|\Lambda_k|} = \frac{|\Aut(\Gamma_{\alpha(\lambda)})||\Aut(\Gamma_{\alpha_1})|\cdots|\Aut(\Gamma_{\alpha_k})|}{|\Aut(\Gamma_{\alpha(\lambda)},\tau_{\lambda},\gamma_{\lambda})|}.
\end{equation}
Substituting \eqref{eqn_count} into \eqref{eqn_mainlhscount} we finally arrive at the identity,
\begin{multline} \label{eqn_mainidlhs}
W_{ij}\left(I-\sum_{(p,q)<(i,j)}\hbar^p I_{pq}^{CT}(\varepsilon),P(\varepsilon,L)\right) = I_{ij} + \\
\sum_{\lambda\in\mathcal{J}_{ij}} \frac{1}{\big|\Aut(\Gamma_{\alpha(\lambda)},\tau_{\lambda},\gamma_{\lambda})\big|} \left[\bigotimes_{v\in V(\Gamma_{\alpha(\lambda)})-\mathscr{V_{\gamma_{\lambda}}}}\left[\widetilde{I}_{g_v,|v|}\right]\bigotimes\tau_{\lambda}^{\#}\left(C_{\gamma_{\lambda}}(\varepsilon)\right)\right] \left(\bigotimes_{e\in E(\Gamma_{\alpha(\lambda)})}\left[P(\varepsilon,L)\right]\right).
\end{multline}

Having finished our calculation of the left-hand side of \eqref{eqn_mainid}, we now proceed with our calculation of the right-hand side, which will be somewhat shorter.
\begin{displaymath}
\begin{split}
\sum_{\Gamma: \begin{subarray}{c} g(\Gamma)=i \\ \ |L(\Gamma)|=j \\ |E(\Gamma)|>0 \end{subarray}} \frac{1}{|\Aut(\Gamma)|}\overline{R}_{\Gamma}(\varepsilon,L) &= \sum_{\alpha\in\mathcal{I}_{ij}} \frac{1}{|\Aut(\Gamma_{\alpha})|}\overline{R}_{\Gamma_{\alpha}}(\varepsilon,L) \\
&= \sum_{\alpha\in\mathcal{I}_{ij}}\sum_{\gamma\varsubsetneq\Gamma_{\alpha}} \frac{1}{|\Aut(\Gamma_{\alpha})|} F_{(\Gamma_{\alpha},\gamma,C_*(\varepsilon))}(I,P(\varepsilon,L)).
\end{split}
\end{displaymath}
By Lemma \ref{lem_cteqvar} we see that the terms in this sum depend only on the isomorphism class of $(\Gamma_{\alpha},\gamma)$ in $\cpairs$. Splitting up this sum using the index set $\mathcal{K}_{ij}$ we arrive at,
\begin{equation} \label{eqn_mainrhscount}
\sum_{\alpha\in\mathcal{I}_{ij}} \frac{1}{|\Aut(\Gamma_{\alpha})|}\overline{R}_{\Gamma_{\alpha}}(\varepsilon,L) = \sum_{\kappa\in\mathcal{K}_{ij}} \frac{|B_{\kappa}|}{\big|\Aut(\Gamma_{\alpha(\kappa)})\big|} F_{(\Gamma_{\alpha(\kappa)},\gamma_{\kappa},C_*(\varepsilon))}(I,P(\varepsilon,L)),
\end{equation}
where $B_{\kappa}$ is the set consisting of all those proper subgraphs $\gamma$ of $\Gamma_{\alpha(\kappa)}$ such that
\[ (\Gamma_{\alpha(\kappa)},\gamma)\cong(\Gamma_{\alpha(\kappa)},\gamma_{\kappa}). \]

Since $\Aut(\Gamma_{\alpha(\kappa)})$ acts transitively on $B_{\kappa}$ and since the isotropy subgroup of $\gamma_{\kappa}$ is $\Aut(\Gamma_{\alpha(\kappa)},\gamma_{\kappa})$ we conclude that,
\[ |B_{\kappa}| = \frac{|\Aut(\Gamma_{\alpha(\kappa)})|}{|\Aut(\Gamma_{\alpha(\kappa)},\gamma_{\kappa})|}. \]
Substituting this expression into \eqref{eqn_mainrhscount} yields,
\begin{equation} \label{eqn_mainidrhs}
\sum_{\alpha\in\mathcal{I}_{ij}} \frac{1}{|\Aut(\Gamma_{\alpha})|}\overline{R}_{\Gamma_{\alpha}}(\varepsilon,L) = \sum_{\kappa\in\mathcal{K}_{ij}} \frac{1}{|\Aut(\Gamma_{\alpha(\kappa)},\gamma_{\kappa})|} F_{(\Gamma_{\alpha(\kappa)},\gamma_{\kappa},C_*(\varepsilon))}(I,P(\varepsilon,L)).
\end{equation}

Comparing our expression \eqref{eqn_mainidlhs} for the left-hand side of Equation \eqref{eqn_mainid} with the above expression \eqref{eqn_mainidrhs} for the second term on the right-hand side of Equation \eqref{eqn_mainid}, we see that Equation \eqref{eqn_mainid} is equivalent to the identity,
\begin{multline*}
\sum_{\lambda\in\mathcal{J}_{ij}} \frac{1}{\big|\Aut(\Gamma_{\alpha(\lambda)},\tau_{\lambda},\gamma_{\lambda})\big|} \left[\bigotimes_{v\in V(\Gamma_{\alpha(\lambda)})-\mathscr{V_{\gamma_{\lambda}}}}\left[\widetilde{I}_{g_v,|v|}\right]\bigotimes\tau_{\lambda}^{\#}\left(C_{\gamma_{\lambda}}(\varepsilon)\right)\right] \left(\bigotimes_{e\in E(\Gamma_{\alpha(\lambda)})}\left[P(\varepsilon,L)\right]\right) \\
= \sum_{\kappa\in\mathcal{K}_{ij}} \frac{1}{|\Aut(\Gamma_{\alpha(\kappa)},\gamma_{\kappa})|} F_{(\Gamma_{\alpha(\kappa)},\gamma_{\kappa},C_*(\varepsilon))}(I,P(\varepsilon,L)).
\end{multline*}
Now the above identity follows from Theorem \ref{thm_eqvcat}, which provides a bijection between the terms in these two sums. This concludes the proof of Equation \eqref{eqn_mainid} and the first part of this theorem.

We now prove \eqref{item_maineffthy}, which follows quite easily now that we have established Equation \eqref{eqn_mainid}. To show that the effective field theory defined by \eqref{eqn_effthy} is the same as the effective field theory defined by \eqref{eqn_BPHZeffthy} is equivalent to proving the identity,
\begin{equation} \label{eqn_effthyeqv}
\lim_{\varepsilon\to 0}\left[ W_{ij}\left(I-\sum_{p,q=0}^{\infty}\hbar^p I_{pq}^{CT}(\varepsilon),P(\varepsilon,L)\right) \right] = I_{ij} + \sum_{\alpha\in\mathcal{I}_{ij}}\frac{1}{|\Aut(\Gamma_{\alpha})|}R_{\Gamma_{\alpha}}(L).
\end{equation}

By Lemma \ref{lem_basicid}, Equation \eqref{eqn_mainid} and Equation \eqref{eqn_ctermeqv} we see that,
\begin{displaymath}
\begin{split}
W_{ij}\left(I-\sum_{p,q=0}^{\infty}\hbar^p I_{pq}^{CT}(\varepsilon),P(\varepsilon,L)\right) &= W_{ij}\left(I-\sum_{(p,q)<(i,j)}\hbar^p I_{pq}^{CT}(\varepsilon),P(\varepsilon,L)\right) - I_{ij}^{CT}(\varepsilon) \\
&= I_{ij} + \sum_{\alpha\in\mathcal{I}_{ij}} \left[\frac{1}{|\Aut(\Gamma_{\alpha})|}\overline{R}_{\Gamma_{\alpha}}(\varepsilon,L)\right] - I_{ij}^{CT}(\varepsilon) \\
&= I_{ij} + \sum_{\alpha\in\mathcal{I}_{ij}} \frac{1}{|\Aut(\Gamma_{\alpha})|}\left[\overline{R}_{\Gamma_{\alpha}}(\varepsilon,L) + C_{\Gamma_{\alpha}}(\varepsilon)\right].
\end{split}
\end{displaymath}
Taking the limit as $\varepsilon\to 0$ of both sides yields \eqref{eqn_effthyeqv}. This concludes our proof of \eqref{item_maineffthy} and this theorem.
\end{proof}


\begin{thebibliography}{12}
%
\bibitem{bogpar}
N. Bogoliubov, O. Parasiuk; \emph{\"Uber die Multiplikation der Kausalfunktionen in der Quantentheorie der Felder.} (German) Acta Math. 97 1957 227--266.
%
\bibitem{collins}
J. C. Collins, \emph{Renormalization. An introduction to renormalization, the renormalization group, and the operator-product expansion.} Cambridge Monographs on Mathematical Physics. Cambridge University Press, Cambridge, 1984.
%
\bibitem{ckhopfI}
A. Connes, D. Kreimer; \emph{Renormalization in quantum field theory and the Riemann-Hilbert problem. I. The Hopf algebra structure of graphs and the main theorem.} Comm. Math. Phys. 210 (2000), no. 1, 249--273.
%
\bibitem{ckhopfII}
A. Connes, D. Kreimer; \emph{Renormalization in quantum field theory and the Riemann-Hilbert problem. II. The $\beta$-function, diffeomorphisms and the renormalization group.} Comm. Math. Phys. 216 (2001), no. 1, 215--241.
%
\bibitem{ckinsert}
A. Connes, D. Kreimer; \emph{Insertion and elimination: the doubly infinite Lie algebra of Feynman graphs.} Ann. Henri Poincar\'e  3 (2002), no. 3, 411--433.
%
\bibitem{costbv}
K. Costello, \emph{Renormalisation and the Batalin-Vilkovisky formalism.} \texttt{arXiv:0706.1533}.
%
\bibitem{coseffthy}
K. Costello, \emph{Renormalization and effective field theory.} Mathematical Surveys and Monographs, 170. American Mathematical Society, Providence, RI, 2011.
%
\bibitem{cycop}
E. Getzler, M. Kapranov; \emph{Cyclic operads and cyclic homology.} Geometry, topology and physics, 167--201, Conf. Proc. Lecture Notes Geom. Topology, IV, Int. Press, Cambridge, MA, 1995.
%
\bibitem{modop}
E. Getzler, M. Kapranov; \emph{Modular operads.} Compositio Math. 110 (1998), no. 1, 65--126.
%
\bibitem{hepp}
K. Hepp, \emph{Proof of the Bogoliubov-Parasiuk theorem on renormalization.} Comm. Math. Phys. Volume 2, Number 1 (1966), 301--326.
%
\bibitem{opalgtophys}
M. Markl, S. Shnider, J. Stasheff; \emph{Operads in algebra, topology and physics.} Mathematical Surveys and Monographs, 96. American Mathematical Society, Providence, RI, 2002.
%
\bibitem{review}
H. Nishimura, \emph{Zentralblatt MATH review of `Renormalization and effective field theory'.} \texttt{Zbl 1221.81004.}
%
\bibitem{treves}
F. Tr\'eves, \emph{Topological vector spaces, distributions and kernels.} Academic Press, New York-London 1967.
%
\bibitem{wilson}
K. Wilson, \emph{Renormalization group and critical phenomena. I. Renormalization group and the Kadanoff scaling picture.} Phys. Rev. B 4, (1971) 3174--3183.
%
\bibitem{zimmer}
W. Zimmermann, \emph{Convergence of Bogoliubov's method of renormalization in momentum space.} Comm. Math. Phys. 15 1969 208--234.
%
\end{thebibliography}
\end{document}